\DeclareMathOperator*{\argmin}{arg\,min}
\DeclareMathOperator{\diam}{diam}
\newcommand{\dfw}{\textsc{DFW-Trace}\xspace}
\newcommand{\naive}{\textsc{Naive-DFW}\xspace}
\newcommand{\sva}{\textsc{SVA}\xspace}
\theoremstyle{definition}
\newtheorem{definition}{Definition}
\theoremstyle{plain}
\newtheorem{theorem}{Theorem}
\newtheorem{lemma}[theorem]{Lemma}
\newtheorem{remark}{Remark}
\begin{document}

\title{A Distributed Frank-Wolfe Framework for Learning Low-Rank Matrices with the Trace Norm
}


\author{Wenjie Zheng         \and
        Aur\'elien Bellet \and 
        Patrick Gallinari
}

%
%
%

\maketitle


\begin{abstract}
We consider the problem of learning a high-dimensional but low-rank matrix from a large-scale dataset distributed over several machines, where low-rankness is enforced by a convex trace norm constraint. 
We propose \dfw, a distributed Frank-Wolfe algorithm which leverages the low-rank structure of its updates to achieve efficiency in time, memory and communication usage. The step at the heart of \dfw is solved approximately using a distributed version of the power method. We provide a theoretical analysis of the convergence of \dfw, showing that we can ensure sublinear convergence in expectation to an optimal solution with few power iterations per epoch. We implement \dfw in the Apache Spark distributed programming framework and validate the usefulness of our approach on synthetic and real data, including the ImageNet dataset with high-dimensional features extracted from a deep neural network.
\end{abstract}


\section{Introduction}
\label{sec:intro}

Learning low-rank matrices is a problem of great importance in machine learning, statistics and computer vision. Since rank minimization is known to be NP-hard, a principled approach consists in solving a convex relaxation of the problem where the rank is replaced by the trace norm (also known as the nuclear norm) of the matrix.
This strategy is supported by a range of theoretical results showing that when the ground truth matrix is truly low-rank, one can recover it exactly (or accurately) from limited samples and under mild conditions \citep[see][]{bach2008consistency,candes2009exact,candes2010power,recht2011simpler,gross2010quantum,gross2011recovering,koltchinskii2011nuclear,bhojanapalli2016global}.
Trace norm minimization has led to many successful applications, among which collaborative filtering and recommender systems \citep{koren2009matrix}, multi-task learning \citep{Argyriou2008a,pong2010trace}, multi-class and multi-label classification \citep{goldberg2010transduction,cabral2011matrix,harchaoui2012large}, robust PCA \citep{cabral2013unifying}, phase retrieval \citep{candes2015phase} and video denoising \citep{ji2010robust}.

We consider the following generic formulation of the problem:
\begin{equation}
\min_{W\in\mathbb{R}^{d\times m}}F\left(W\right)=\sum_{i=1}^nf_i(W)\quad\text{s.t. }\|W\| _{*}\le\mu,
\label{eq:tn-opt}
\end{equation}
where the $f_i$'s are differentiable with Lipschitz-continuous gradient, $\|W\| _{*}=\sum_k \sigma_k(W)$ is the trace norm of $W$ (the sum of its singular values), and $\mu>0$ is a regularization parameter (typically tuned by cross-validation). In a machine learning context, an important class of problems considers $f_i(W)$ to be a loss value which is small (resp. large) when $W$ fits well (resp. poorly) the $i$-th data point (see Section~\ref{sec:app} for concrete examples).\footnote{More general cases can be addressed, such as pairwise loss functions $f_{i,j}$ corresponding to pairs of data points.} In this work, we focus on the large-scale scenario where the quantities involved in \eqref{eq:tn-opt} are large: typically, the matrix dimensions $d$ and $m$ are both in the thousands or above, and the number of functions (data points) $n$ is in the millions or more. 

Various approaches have been proposed to solve the trace norm minimization problem \eqref{eq:tn-opt}.\footnote{Some methods consider an equivalent formulation where the trace norm appears as a penalization term in the objective function rather than as a constraint.} One can rely on reformulations as semi-definite programs and use out-of-the-shelf solvers such as SDPT3 \citep{toh1999sdpt3} or SeDuMi \citep{sturm1999using}, but this does not scale beyond small-size problems. To overcome this limitation, first-order methods like Singular Value Thresholding \citep{cai2010singular}, Fixed Point Continuation algorithms \citep{ma2011fixed} and more generally projected/proximal gradient algorithms \citep{Parikh2013a} have been proposed. These approaches have two important drawbacks preventing their use when the matrix dimensions $d$ and $m$ are both very large: they require to compute a costly (approximate) SVD at each iteration, and their memory complexity is $O(dm)$. In this context, Frank-Wolfe (also known as conditional gradient) algorithms \citep{frank1956algorithm} provide a significant reduction in computational and memory complexity: they only need to compute the leading eigenvector at each iteration, and they maintain compact low-rank iterates throughout the optimization \citep{hazan2008sparse,jaggi2010simple,jaggi2013revisiting,Harchaoui2015a}. 
However, as all first-order algorithms, Frank-Wolfe requires to compute the gradient of the objective function at each iteration, which requires a full pass over the dataset and becomes a bottleneck when $n$ is large.

The goal of this paper is to propose a \emph{distributed version of the Frank-Wolfe algorithm} in order to alleviate the cost of gradient computation when solving problem \eqref{eq:tn-opt}. We focus on the Bulk Synchronous Parallel (BSP) model with a master node connected to a set of slaves (workers), each of the workers having access to a subset of the $f_i$'s (typically corresponding to a subset of training points).
Our contributions are three-fold. First, we propose \dfw, a Frank-Wolfe algorithm relying on a distributed power method to approximately compute the leading eigenvector with communication cost of $O(d+m)$ per pass over the dataset (\emph{epoch}). This dramatically improves upon the $O(dm)$ cost incurred by a naive distributed approach. Second, we prove the sublinear convergence of \dfw  to an optimal solution in expectation, quantifying the number of power iterations needed at each epoch. This result guarantees that \dfw can find low-rank matrices with small approximation error using few power iterations per epoch.
Lastly, we provide a modular implementation of our approach in the Apache Spark programming framework \citep{Zaharia:2010:SCC:1863103.1863113} which can be readily deployed on commodity and commercial clusters. We evaluate the practical performance of \dfw by applying it to multi-task regression and multi-class classification tasks on synthetic and real-world datasets, including the ImageNet database \citep{Deng2009a} with high-dimensional features generated by a deep neural network. The results confirm that \dfw has fast convergence and outperforms competing methods. While distributed FW algorithms have been proposed for other classes of problems \citep{Bellet2015b,moharrerdistributing,wang2016parallel}, to the best of our knowledge our work is the first to propose, analyze and experiment with a distributed Frank-Wolfe algorithm designed specifically for trace norm minimization.

The rest of this paper is organized as follows. Section~\ref{sec:background} introduces some background on the (centralized) Frank-Wolfe algorithm and its specialization to trace norm minimization, and reviews some applications. After presenting some baseline approaches for the distributed setting, Section~\ref{sec:dfw} describes our algorithm \dfw and its convergence analysis, as well as some implementation details. Section~\ref{sec:related} discusses some related work, and Section~\ref{sec:exp} presents the experimental results.


\section{Background}
\label{sec:background}

We review the centralized Frank-Wolfe algorithm in Section~\ref{sec:fw} and its specialization to trace norm minimization in Section~\ref{sec:trace}. We then present some applications to multi-task learning and multi-class classification in Section~\ref{sec:app}.

\subsection{Frank-Wolfe Algorithm}
\label{sec:fw}

\begin{algorithm}[t]
\caption{Centralized Frank-Wolfe algorithm to solve \eqref{eq:fw-gen}}
\label{alg:fw}
\begin{algorithmic}[0]
\State \textbf{Input:} Initial point $W^0 \in \mathcal{D}$, number of iterations $T$
\For{$t = 0,\dots,T-1$}
    \State $S^* \leftarrow \arg\min_{S\in\mathcal{D}} \langle S, \nabla F(W^t) \rangle$
    \Comment solve linear subproblem
    \State $\gamma^t \leftarrow \frac{2}{t+2}$ (or determined by line search)
    \Comment step size
    \State $W^{t+1} \leftarrow (1-\gamma^t) W^t + \gamma^t S^*$
    \Comment update
\EndFor
\State \textbf{Output:} $W^T$
\end{algorithmic}
\end{algorithm}

The original Frank-Wolfe (FW) algorithm dates back from the 1950s and was originally designed for quadratic programming \citep{frank1956algorithm}. The scope of the algorithm was then extended to sparse greedy approximation \citep{clarkson2010coresets} and semi-definite programming \citep{hazan2008sparse}. Recently, \citet{jaggi2013revisiting} generalized the algorithm further to tackle the following generic problem:
\begin{equation}
\min_{W\in\mathcal{D}}F\left(W\right),
\label{eq:fw-gen}
\end{equation}
where $F$ is convex and continuously differentiable, and the feasible domain $\mathcal{D}$ is a compact convex subset of some Hilbert space with inner product $\langle \cdot,\cdot\rangle$.  

Algorithm~\ref{alg:fw} shows the generic formulation of the FW algorithm applied to \eqref{eq:fw-gen}. At each iteration $t$, the algorithm finds the feasible point $S^*\in\mathcal{D}$ which minimizes the linearization of $F$ at the current iterate $W^t$. The next iterate $W^{t+1}$ is then obtained by a convex combination of $W^t$ and $S^*$, with a relative weight given by the step size $\gamma^t$. By convexity of $\mathcal{D}$, this ensures that $W^{t+1}$ is feasible. The algorithm converges in $O(1/t)$, as shown by the following result from \citet{jaggi2013revisiting}.

\begin{theorem}[\citealp{jaggi2013revisiting}]
\label{thm:fw}
Let $C_F$ be the curvature constant of $F$.\footnote{This constant is bounded above by $L\diam(\mathcal{D})^2$, where $L$ is the Lipschitz constant of the gradient of $F$ \citep[see][]{jaggi2013revisiting}.} For each $t\geq 1$, the iterate $W^t\in\mathcal{D}$ generated by Algorithm~\ref{alg:fw} satisfies $F(W^t) - F(W^*) \leq \frac{2C_F}{t+2}$, where $W^*\in\mathcal{D}$ is an optimal solution to \eqref{eq:fw-gen}.
\end{theorem}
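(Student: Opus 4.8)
The plan is to run the textbook Frank--Wolfe analysis: convert the linear-subproblem optimality into a progress inequality via the curvature constant, and then close a one-step recursion by induction. The only property of $C_F$ I will use is the \emph{quadratic upper bound} it encodes: for any $W,S\in\mathcal{D}$ and any $\gamma\in[0,1]$, the point $W' = W + \gamma(S-W)$ satisfies
\[
F(W') \;\le\; F(W) + \gamma\,\langle S - W,\,\nabla F(W)\rangle + \tfrac{\gamma^2}{2}\,C_F .
\]
This is immediate from the definition of the curvature constant, and (if one prefers the weaker hypothesis in the footnote) also follows from the standard descent lemma for functions with $L$-Lipschitz gradient together with $C_F \le L\,\diam(\mathcal{D})^2$.

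Next I would exploit the Frank--Wolfe step. Since $S^*$ minimizes $\langle S,\nabla F(W^t)\rangle$ over $S\in\mathcal{D}$ and the optimum $W^*$ is feasible, $\langle S^*-W^t,\nabla F(W^t)\rangle \le \langle W^*-W^t,\nabla F(W^t)\rangle$, and convexity of $F$ bounds the right-hand side by $F(W^*)-F(W^t)$. Writing $h_t := F(W^t)-F(W^*)\ge 0$ and applying the quadratic bound with $S=S^*$ and $\gamma=\gamma^t=\tfrac{2}{t+2}$ yields
\[
h_{t+1} \;\le\; (1-\gamma^t)\,h_t + \tfrac{(\gamma^t)^2}{2}\,C_F .
\]
For the line-search variant the same inequality holds, because the line search chooses a step at least as good as the fixed $\gamma^t$, so $F(W^{t+1})$ is no larger than the value used above. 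Finally I would finish by induction on $t$: the base case $t=0$ uses $\gamma^0=1$, which annihilates the $h_0$ term and gives $h_1 \le \tfrac12 C_F \le \tfrac{2C_F}{3}$; for the inductive step, substituting $h_t \le \tfrac{2C_F}{t+2}$ into the recursion gives $h_{t+1} \le \tfrac{2C_F(t+1)}{(t+2)^2}$, and the elementary inequality $(t+1)(t+3)\le (t+2)^2$ turns this into $h_{t+1}\le \tfrac{2C_F}{t+3}$, closing the induction.

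I do not expect a genuine obstacle here — the argument is elementary once the quadratic upper bound is available. The two points needing a little care are (i) making sure the line-search step is handled correctly, i.e.\ observing that minimizing over all feasible step sizes can only improve on the guarantee obtained at $\gamma^t=\tfrac{2}{t+2}$, and (ii) getting the induction bookkeeping right so that the $\tfrac{2C_F}{t+2}$ bound propagates cleanly, which hinges precisely on the special choice $\gamma^t=\tfrac{2}{t+2}$ and on $(t+1)(t+3)\le (t+2)^2$.
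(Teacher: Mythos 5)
Your proof is correct and follows the standard Frank--Wolfe analysis; the paper cites this result from Jaggi (2013) without reproving it, but its own Lemma~\ref{lem:conv-exp} in Appendix~\ref{sec:proof1} runs exactly the same recursion-plus-induction argument (with the same base case $\gamma^0=1$ and the same $(t+1)(t+3)\le(t+2)^2$ step), merely generalized to an approximate subproblem oracle in expectation. Your argument is the $\delta=0$ specialization of that lemma, including the correct handling of the line-search variant.
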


\begin{remark}
There exist several variants of the FW algorithm, for which faster rates can sometimes be derived under additional assumptions. We refer to \cite{jaggi2013revisiting}, and \cite{Lacoste-Julien2015a} for details.
\end{remark}

From the algorithmic point of view, the main step in Algorithm~\ref{alg:fw} is to solve the linear subproblem over the domain $\mathcal{D}$. By the linearity of the subproblem, a solution always lies at an extremal point of $\mathcal{D}$, hence FW can be seen as a greedy algorithm whose iterates are convex combinations of extremal points (adding a new one at each iteration). When these extremal points have some specific structure (e.g., sparsity, low-rankness), the iterates inherit this structure and the linear subproblem can sometimes be solved very efficiently. This is the case for the trace norm constraint, our focus in this paper.

\subsection{Specialization to Trace Norm Minimization}
\label{sec:trace}

The FW algorithm applied to the trace norm minimization problem \eqref{eq:tn-opt} must solve the following subproblem:
\begin{equation}
S^*\in\argmin_{\|S\|_{*}\le\mu}\langle S,\nabla F(W^{t})\rangle,
\label{eq:tn-sub}
\end{equation}
where $W^t\in\mathbb{R}^{d\times m}$ is the iterate at time $t$ and $S\in\mathbb{R}^{d\times m}$.
The trace norm ball is the convex hull of the rank-1 matrices, so there must exist a rank-1 solution to \eqref{eq:tn-sub}. This solution can be shown to be equal to $-\mu u_1v_1^\top$, where $u_1$ and $v_1$ are the unit left and right top singular vectors of the gradient matrix $\nabla F(W^{t})$ \citep{jaggi2013revisiting}. Finding the top singular vectors of a matrix is much more efficient than computing the full SVD. This gives FW a significant computational advantage over projected and proximal gradient descent approaches when the matrix dimensions are large. Furthermore, assuming that $W^0$ is initialized to the zero matrix, $W^{t}$ can be stored in a compact form as a convex combination of $t$ rank-1 matrices, which requires $O(t(d+m))$ memory instead of $O(dm)$ to store a full rank matrix. As implied by Theorem~\ref{thm:fw}, FW is thus guaranteed to find a rank-$t$ whose approximation error is $O(1/t)$ for any $t\geq 1$. In practice, when the ground truth matrix is indeed low-rank, FW can typically recover a very accurate solution after $t\ll \min(d, m)$ steps.

We note that in the special case where the matrix $W$ is square ($d=m$) and constrained to be symmetric, the gradient $\nabla F(W^{t})$ can always be written as a symmetric matrix, and the solution to the linear subproblem has a simpler representation based on the leading eigenvector of the gradient, see \citet{jaggi2013revisiting}.

\subsection{Applications}
\label{sec:app}

We describe here two tasks where trace norm minimization has been successfully applied, which we will use to evaluate our method in Section~\ref{sec:exp}.\\

\noindent\textbf{Multi-task least square regression.} This is an instance of multi-task learning \citep{Caruana1997a}, where one aims to jointly learn $m$ related tasks. Formally, let $X\in\mathbb{R}^{n\times d}$ be the feature matrix ($n$ training points in $d$-dimensional space) and $Y\in\mathbb{R}^{n\times m}$ be the response matrix (each column corresponding to a task). The objective function aims to minimize the residuals of all tasks simultaneously:
\begin{equation}
\label{eq:mls}
F(W)=\frac{1}{2}\left\Vert XW-Y\right\Vert _{F}^{2}=\frac{1}{2}\sum_{i=1}^n\sum_{j=1}^m(x_{i}^{T}w_{j}-y_{ij})^{2},
\end{equation}
where $\|\cdot\|_F$ is the Frobenius norm.
Using a trace norm constraint on $W$ allows to couple the tasks together by making the task predictors share a common subspace, which is a standard approach to multi-task learning \citep[see e.g.,][]{Argyriou2008a,pong2010trace}.\\

\noindent\textbf{Multinomial logistic regression.} Consider a classification problem with $m$ classes. Let $X\in\mathbb{R}^{n\times d}$ be the feature matrix and $y\in\{1,\dots, m\}^n$ the label vector. Multinomial logistic regression minimizes the negative log-likelihood function:
\begin{equation}
F(W)=\sum_{i}\log\Big(1+\sum_{l\ne y_{i}}\exp(w_{l}^{T}x_{i}-w_{y_{i}}^{T}x_{i})\Big)=\sum_{i}\Big(-w_{y_{i}}^{T}x_{i}+\log\sum_{l}\exp(w_{l}^{T}x_{i})\Big). \label{eq:mlr}
\end{equation}
The motivation for using the trace norm is that multi-class problems with a large number of categories usually exhibit low-rank embeddings of the classes \citep[see][]{Amit2007a,harchaoui2012large}.\\



\section{Distributed Frank-Wolfe for Trace Norm Minimization}
\label{sec:dfw}

We now consider a distributed master/slave architecture with $N$ slaves (workers). The master node is connected to all workers and acts mainly as an aggregator, while most of the computation is done on the workers. The individual functions $f_1,\dots,f_n$ in the objective \eqref{eq:tn-opt} are partitioned across workers, so that all workers can collectively compute all functions but each worker can only compute its own subset. Recall that in a typical machine learning scenario, each function $f_i$ corresponds to the loss function computed on the $i$-th data point (as in the examples of Section~\ref{sec:app}). We will thus often refer to these functions as data points.
Formally, let $I_j\subseteq\{1,\dots,n\}$ be the set of indices assigned to worker $j$, where $I_1\cup\dots\cup I_N=\{1,\dots,n\}$ and $I_1\cap\dots\cap I_N=\emptyset$. We denote by $F_j=\sum_{i\in I_j} f_i$ the local function (dataset) associated with each worker $j$, and by $n_j=|I_j|$ the size of this local dataset.

We follow the Bulk Synchronous Parallel (BSP) computational model: each iteration (\emph{epoch}) alternates between parallel computation at the workers and communication with the master (the latter serves as a synchronization barrier).

\subsection{Baseline Strategies}
\label{sec:baseline}

Before presenting our algorithm, we first introduce two baseline distributed FW strategies (each with their own merits and drawbacks).\\

\noindent\textbf{Naive DFW.}
One can immediately see a naive way of running the centralized Frank-Wolfe algorithm (Algorithm~\ref{alg:fw}) in the distributed setting. Starting from a common initial point $W^0$, each worker $j$ computes at each iteration $t$ its local gradient $\nabla F_j(W^t)$ and sends it to the master. The master then aggregates the messages to produce the full gradient $\nabla F(W^t)=\sum_{j=1}^N F_j(W^t)$, solves the linear subproblem by computing the leading right/left singular vectors of $\nabla F(W^t)$ and sends the solution back to the workers, who can form the next iterate $W^{t+1}$. \naive exactly mimics the behavior of the centralized FW algorithm, but induces a communication cost of $O(Ndm)$ per epoch as in many applications (such as those presented in Section~\ref{sec:app}) the local gradients are dense matrices. In the large-scale setting where the matrix dimensions $d$ and $m$ are both large, this cost dramatically limits the efficiency of the algorithm.\\

\noindent\textbf{Singular Vector Averaging.} A possible strategy to avoid this high communication cost is to ask each worker $j$ to send to the master the rank-1 solution to the local version of the subproblem \eqref{eq:tn-sub}, in which they use their local gradient $\nabla F_j(W^t)$ as an estimate of the full gradient $\nabla F(W^t)$. This reduces the communication to a much more affordable cost of $O(N(d+m))$.
Note that averaging the rank-1 solutions would typically lead to a rank-$N$ update, which breaks the useful rank-1 property of FW and is undesirable when $N$ is large. Instead, the master averages the singular vectors (weighted proportionally to $n_j$), resolving the sign ambiguity by setting the largest entry of each singular vector to be positive and using appropriate normalization, as mentioned for instance in \citet{Bro2008a}. We refer to this strategy as Singular Vector Averaging (\sva). \sva is a reasonable heuristic when the individual functions are partitioned across nodes uniformly at random: in this case the local gradients can be seen as unbiased estimates of the full gradient. However the singular vector estimate itself is biased (averaging between workers only reduces its variance), and for $n$ fixed this bias increases with the matrix dimensions $d$ and $m$ but also with the number of workers $N$ (which is not a desirable property in the distributed setting). It is also expected to perform badly on arbitrary (non-uniform) partitions of functions across workers. Clearly, one cannot hope to establish strong convergence guarantees for \sva.

\subsection{Proposed Approach}

We now describe our proposed approach, referred to as \dfw. We will see that \dfw achieves roughly the small communication cost of \sva while enjoying a similar convergence rate as \naive (and hence centralized FW).\\

\begin{algorithm}[t]
\caption{Our distributed algorithm \dfw to solve \eqref{eq:tn-opt}}
\label{alg:dfw}
\begin{algorithmic}[1]
\State \textbf{Input:} Initial point $W^0 \in \mathcal{D}$, number of iterations $T$
\For{$t = 0,\dots,T-1$}
    \State \textbf{Each worker $j$:} $\nabla F_{j}(W^t) \leftarrow \sum_{i\in I_j}\nabla f_{i}(W^t)$
    \State \textbf{All workers:} draw the same $v_{0}\in\mathbb{R}^m$ uniformly on unit sphere
    \For{$k = 0,\dots,K(t)-1$}\Comment{distributed power method}\label{line:pmstart}
        \State \textbf{Each worker $j$:} send $u_{k+1,j}\leftarrow \nabla F_{j}(W^t)v_{k}$ to master
        \State \textbf{Master:} broadcast $u_{k+1}\leftarrow(\sum_{j=1}^{N}u_{k+1,j}) / \|\sum_{j=1}^{N}u_{k+1,j}\|$ 
        \State \textbf{Each worker $j$:} send $v_{k+1,j}\leftarrow \nabla F_{j}(W^t)^\top u_{k+1}$ to master
        \State \textbf{Master:} broadcast $v_{k+1}\leftarrow(\sum_{j=1}^{N}v_{k+1,j})/\|\sum_{j=1}^{N}v_{k+1,j}\|$
    \EndFor\label{line:pmend}
    \State $\gamma^t \leftarrow \frac{2}{t+2}$ (or determined by line search)
    \Comment step size
    \State \textbf{Each worker $j$:} $W^{t+1} \leftarrow (1-\gamma^t) W^t - \gamma^t \mu u_{K(t)}v_{K(t)}^\top $
    \Comment update
\EndFor
\State \textbf{Output:} $W^T$
\end{algorithmic}
\end{algorithm}


\noindent\textbf{Algorithm.}
The main idea of \dfw (Algorithm~\ref{alg:dfw}) is to solve the linear subproblem of FW approximately using a distributed version of the power method applied to the matrix $\nabla F(W^t)^\top F(W^t)$. At each outer iteration (epoch) $t$, the workers first generate a common random vector drawn uniformly on the unit sphere.\footnote{This can be done without communication: for instance, the workers can agree on a common random seed before running the algorithm.} Then, for $K(t)$ iterations, the algorithm alternates between the workers computing matrix-vector products and the master aggregating the results. At the end of this procedure, workers hold the same approximate versions of the left and right singular vectors of $\nabla F(W^t)$ and use them to generate the next iterate $W^{t+1}$.

The communication cost of \dfw per epoch is $O(NK(t)(d+m))$ (see Table~\ref{tab:com} for a comparison with baselines). It is clear that as $K(t)\rightarrow\infty$, \dfw computes the exact solution to the linear subproblems and hence has the same convergence guarantees as centralized FW. However, we would like to set $K(t)\ll \min(d,m)$ to provide a significant improvement over the $O(Ndm)$ cost of the naive distributed algorithm. The purpose of our analysis below is to show how to set $K(t)$ to preserve the convergence rate of the centralized algorithm.

\begin{table}[t]
\caption{Communication cost per epoch of the various algorithms. $K(t)$ is the number of power iterations used by \dfw.}
\label{tab:com}
\centering
\begin{tabular}{ccc}
\hline\noalign{\smallskip}
 Algorithm & Communication cost & \# communication rounds \\ 
\noalign{\smallskip}\hline\noalign{\smallskip}
Naive FW & $Ndm$ & 1 \\ 
Singular Vector Averaging & $N(d+m)$ & 1 \\ 
\dfw & $2NK(t)(d+m)$ & $2K(t)$ \\ 
\noalign{\smallskip}\hline
\end{tabular}
\end{table}

\begin{remark}[Other network topologies]
Since any connected graph can be logically represented as a star graph by choosing a center, our method virtually works on any network (though it may incur additional communication). Depending on the topology, special care can be taken to reduce the communication overhead. An interesting case is the rooted tree network: we can adopt a hierarchical aggregation scheme which has the same communication cost of $O(NK(t)(d+m))$ as the star network but scales better to many workers by allowing parallel aggregations.\footnote{In Apache Spark, this is implemented in \texttt{treeReduce} and \texttt{treeAggregate}.} For a general graph with $M$ edges, $O(MK(t)(d+m))$ communication is enough to broadcast the values to all workers so they can perform the aggregation locally.
\end{remark}

\noindent\textbf{Analysis.}
We will establish that for some appropriate choices of $K(t)$, \dfw achieves sublinear convergence in expectation, as defined below.

\begin{definition}
\label{def:conv}
Let $\delta\geq 0$ be an accuracy parameter. We say that \dfw \emph{converges sublinearly in expectation} if for each $t\ge 1$, its iterate $W^t$ satisfies
\begin{equation}
\label{eq:con-exp}
\mathbb{E}[F(W^t)] - F(W^*) \le \tfrac{2C_F}{t+2}(1+\delta),
\end{equation}
where $C_F$ is the curvature constant of $F$.
\end{definition}
We have the following result.

\begin{theorem}[Convergence]
\label{thm:conv-exp}
Let $F$ be a convex, differentiable function with curvature $C_F$ and Lipschitz constant $L$ w.r.t. the trace norm. For any accuracy parameter $\delta\geq 0$, the following properties hold for \dfw (Algorithm~\ref{alg:dfw}):
\begin{enumerate}
\item If $m\geq 8$ and for any $t\geq 0$, $K(t) \geq 1+\lceil\frac{\mu L (t+2) \ln m}{\delta C_F}\rceil$, then \dfw converges sublinearly in expectation.
\item For any $t\geq 0$, let $\sigma_1^t$, $\sigma_2^t$ be the largest and the second largest singular values of $\nabla F(W^t)$ and assume that $\sigma_1^t$ has multiplicity 1 and there exists a constant $\beta$ such that $\tfrac{\sigma_2^t}{\sigma_1^t} < \beta <1 $. If $K(t) \geq \max ( \lceil \frac{\ln (\delta C_F) - \ln [m \mu L (t+2)]}{2 \ln \beta} \rceil + 1, \widetilde{K})$ where $\widetilde{K}$ is a large enough constant, \dfw converges sublinearly in expectation.
\end{enumerate}
\end{theorem}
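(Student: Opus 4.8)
The plan is to reduce the analysis of \dfw\ to a perturbed version of the centralized Frank-Wolfe analysis (Theorem~\ref{thm:fw}). The key observation is that \dfw\ at epoch $t$ uses an approximate linear minimizer $\widetilde{S}^t = -\mu u_{K(t)}v_{K(t)}^\top$ instead of the exact minimizer $S^* = -\mu u_1 v_1^\top$. If we can show that $\langle \widetilde{S}^t - S^*, \nabla F(W^t)\rangle$ is small — say bounded by $\tfrac{\delta}{t+2} \cdot \text{(something like } C_F)$ — then the standard descent-lemma argument behind Theorem~\ref{thm:fw} goes through with the extra additive error, and unrolling the recursion yields exactly the bound $\tfrac{2C_F}{t+2}(1+\delta)$ of Definition~\ref{def:conv}. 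Concretely, I would first write the one-step inequality: using the curvature constant,
\begin{equation*}
F(W^{t+1}) \le F(W^t) - \gamma^t \big(F(W^t) - F(W^*)\big) + \gamma^t \big\langle \widetilde{S}^t - S^*, \nabla F(W^t)\big\rangle + \tfrac{(\gamma^t)^2}{2} C_F,
\end{equation*}
with $\gamma^t = 2/(t+2)$; then bound the inner-product term by $\mu \|\nabla F(W^t)\| \cdot \|u_{K(t)}v_{K(t)}^\top - u_1 v_1^\top\|$ in an appropriate norm, and finally take expectations and induct on $t$.

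The heart of the argument — and the main obstacle — is controlling the power-method error $\mathbb{E}\big[\|u_{K(t)}v_{K(t)}^\top - u_1 v_1^\top\|\big]$ (or, more precisely, the expected suboptimality gap it induces) as a function of the number of iterations $K(t)$. For part~(2), this is the classical gapped case: when $\sigma_2^t/\sigma_1^t < \beta < 1$, the power method applied to $\nabla F(W^t)^\top\nabla F(W^t)$ converges geometrically, the error after $K$ iterations behaving like $\beta^{2K}$ up to a factor depending on the overlap of the random initialization $v_0$ with the leading singular vector. The $\widetilde K$ "large enough constant" and the logarithmic dependence on $m$ absorb the (random) initial overlap $|\langle v_0, v_1\rangle|$, whose inverse has expectation controlled by standard concentration for the uniform distribution on $S^{m-1}$; solving $\beta^{2K} \cdot (\text{poly factors}) \le \delta C_F / (m\mu L(t+2))$ for $K$ gives precisely the stated threshold. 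For part~(1), where no eigengap is assumed, I would invoke the gap-free convergence bound for the power method (of the type found in Hardt–Price / Musco–Musco, or derivable directly): after $K$ iterations one gets $\sigma_1 - u_K^\top (\nabla F \nabla F^\top) u_K \lesssim \sigma_1 \ln m / K$ in expectation, which makes the induced inner-product error $O(\mu L \ln m / K)$; setting this $\le \delta C_F/(t+2)$ yields the linear-in-$t$ requirement $K(t) \gtrsim \mu L (t+2)\ln m / (\delta C_F)$, matching the theorem.

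The remaining steps are bookkeeping: (i) translate the spectral error on the vectors into an error on the rank-one matrix $u_K v_K^\top$, being careful that \dfw\ alternates $u$- and $v$-updates so one extra half-step of the procedure may be needed (hence the "$+1$" in both bounds on $K(t)$); (ii) verify that the sign/normalization choices in the master's aggregation step do not affect $\|u_K v_K^\top - u_1 v_1^\top\|$ (the rank-one matrix is sign-invariant); (iii) handle expectations carefully — the randomness is only in $v_0$, drawn once per epoch, so the $W^t$ is itself random and the induction must be stated for $\mathbb{E}[F(W^t)]$, using that the one-step bound holds conditionally on the history and then taking total expectation; and (iv) check the technical condition $m\ge 8$, which enters only through the concentration estimate for $\mathbb{E}[1/|\langle v_0, v_1\rangle|]$ or $\mathbb{E}[\ln(1/|\langle v_0,v_1\rangle|)]$ on the sphere. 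I expect step~(i) combined with the gap-free power-method bound in part~(1) to require the most care, since gap-free guarantees are more delicate than the textbook geometric rate and the constants must be tracked to land on the exact form in the statement.
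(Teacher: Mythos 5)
Your overall architecture is the paper's: an approximate-oracle version of the Frank--Wolfe descent lemma (requiring $\langle\mathbb{E}[\hat S\mid W^t],\nabla F(W^t)\rangle \le \min_{S\in\mathcal{D}}\langle S,\nabla F(W^t)\rangle + \tfrac12\delta\gamma^t C_F$, followed by the same conditional-expectation-plus-induction argument yielding $\tfrac{2C_F}{t+2}(1+\delta)$), combined with the Kuczy\'nski--Wo\'zniakowski power-method bounds (the gap-free $\alpha(m)\ln m/(K-1)$ bound for part~1 and the geometric $m(\lambda_2/\lambda_1)^{K-1}$ bound for part~2 --- this is exactly where the $m\ge 8$ hypothesis and the constant $\widetilde K$ come from), together with the observation that $L$-Lipschitzness w.r.t.\ the trace norm forces $\sigma_1^t\le L$ via duality with the spectral norm.

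The one step in your plan that would fail is the proposed bound $\langle\widetilde S^t-S^*,\nabla F(W^t)\rangle\le\mu\,\|\nabla F(W^t)\|\cdot\|u_{K}v_{K}^\top-u_1v_1^\top\|$. In part~1 no spectral gap is assumed, and without a gap the power-method \emph{iterates} $v_K$ need not converge to $v_1$ at all (take $\sigma_1^t=\sigma_2^t$); only the Rayleigh quotient converges. So ``translating the spectral error on the vectors into an error on the rank-one matrix,'' which you flag as the delicate step, is not merely delicate --- it is a dead end for part~1. The paper avoids it entirely via an exact identity that you already half-state in your second paragraph: with $u_K= A^t v_K/\lVert A^t v_K\rVert$ (where $A^t:=\nabla F(W^t)$) one has $\langle u_Kv_K^\top,A^t\rangle=\lVert A^t v_K\rVert$ exactly, while $\max_{\lVert S\rVert_*\le\mu}\langle S,A^t\rangle=\mu\sigma_1^t$, so the oracle suboptimality is exactly $\mu(\sigma_1^t-\lVert A^t v_K\rVert)$. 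This is controlled directly by the eigenvalue-estimate error of the power method on $(A^t)^\top A^t$, using $|x-1|\le|x^2-1|$ for $x\ge 0$ applied to $x=\lVert A^t v_K\rVert/\sigma_1^t$; no eigenvector convergence, sign resolution, or matrix-difference bound is ever needed. If you commit to that identity in place of the matrix-difference bound (and then substitute $\sigma_1^t\le L$ and $\gamma^t=2/(t+2)$ to read off the stated thresholds on $K(t)$), your argument coincides with the paper's proof.
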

\begin{proof}[Sketch]
We briefly outline the main ingredients (see Appendix~\ref{sec:proof1} for the detailed proof). We first show that if the linear subproblem is approximately solved in expectation (to sufficient accuracy), then the FW algorithm converges sublinearly in expectation. Relying on results on the convergence of the power method \citep{kuczynski1992estimating} and on the Lipschitzness of $F$, we then derive the above results on the number of power iterations $K(t)$ needed to ensure sufficient accuracy under different assumptions.
\end{proof}

Theorem~\ref{thm:conv-exp} characterizes the number of power iterations $K(t)$ at each epoch $t$ which is sufficient to guarantee that \dfw converges sublinearly in expectation to an optimal solution. Note that there are two regimes. The first part of the theorem establishes that if $K(t)$ scales linearly in $t$, the expected output of \dfw after $t$ epochs is a rank-$t$ matrix with $O(1/t)$ approximation error (as in centralized FW, see Theorem~\ref{thm:fw}).
In the large-scale setting of interest, this implies that a good low-rank approximation can be achieved by running the algorithm for $t\ll\min(d, m)$ iterations, and with reasonable communication cost since $K(t) = O(t)$.
Remarkably, this result holds without any assumption about the spectral structure of the gradient matrices. On the other hand, in the regime where the gradient matrices are ``well-behaved'' (in the sense that the ratio between their two largest singular values is bounded away from $1$), the second part of the theorem shows that a much lower number of power iterations $K(t)=O(\log t)$ is sufficient to ensure the sublinear convergence in expectation. In Section~\ref{sec:exp}, we will see experimentally on several datasets that this is indeed sufficient in practice to achieve convergence.
We conclude this part with a few remarks mentioning some additional results, for which we omit the details due to the lack of space.

\begin{remark}[Convergence in probability]
We can also establish the sublinear convergence of \dfw \emph{in probability} (which is stronger than convergence in expectation). The results are analogous to Theorem~\ref{thm:fw} but require $K(t)$ to be quadratic in $t$ for the first case, and linear in $t$ for the second case.
\end{remark}

\begin{remark}[Constant number of power iterations]
If we take the number of power iterations to be constant across epochs (i.e., $K(t) = K$ for all $t$), \dfw converges in expectation to a neighborhood of the optimal solution whose size decreases with $K$. We can establish this by combining results on the approximation error of the power method with Theorem~5.1 in \citet{Freund2016a}.
\end{remark}

\subsection{Implementation}
\label{sec:imple}

Our algorithm \dfw (Algorithm~\ref{alg:dfw}) can be implemented as a sequence of map-reduce steps \citep{dean2008mapreduce}. This allows the computation to be massively parallelized across the set of workers, while allowing a simple implementation and fast deployment on commodity and commercial clusters via existing distributed programming frameworks \citep{dean2008mapreduce,Zaharia:2010:SCC:1863103.1863113}.
Nonetheless, some special care is needed if one wants to get an efficient implementation. In particular, it is key to leverage the fundamental property of FW algorithms that the updates are rank-1. This structural property implies that it is much more efficient to compute the gradient in a recursive manner, rather than from scratch using the current parameters.
We use a notion of \emph{sufficient information} to denote the local quantities (maintained by each worker) that are sufficient to compute the updates. This includes the local gradient (for the reason outlined above), and sometimes some quantities precomputed from the local dataset.
Depending on the objective function and the relative size of the problem parameters $n$, $m$, $d$ and $N$, the memory and/or time complexity may be improved by storing (some of) the sufficient information in low-rank form. We refer the reader to Appendix~\ref{sec:imp-app} for a concrete application of these ideas to the tasks of multi-task least square regression and multinomial logistic regression used in our experiments.

Based on the above principles, we developed an open-source Python implementation of \dfw using the Apache Spark framework \citep{Zaharia:2010:SCC:1863103.1863113}.\footnote{\url{https://github.com/WenjieZ/distributed-frank-wolfe}} The package also implements the baseline strategies of Section~\ref{sec:baseline}, and currently uses dense representations. The code is modular and separates generic from task-specific components. In particular, the generic \dfw algorithm is implemented in PySpark (Spark's Python API) in a task-agnostic fashion. On the other hand, specific tasks (objective function, gradient, etc) are implemented separately in pure Python code. This allows users to easily extend the package by adding their own tasks of interest without requiring Spark knowledge.
Specifically, the task interface should implement several methods: \verb+stats+ (to initialize the sufficient information), \verb+update+ (to update the sufficient information), and optionally \verb+linesearch+ (to use linesearch instead of default step size) and \verb+loss+ (to compute the value of the objective function).
In the current version, we provide such interface for multi-task least square regression and multinomial logistic regression.


\section{Related Work}
\label{sec:related}

There has been a recent surge of interest for the Frank-Wolfe algorithm and its variants in the machine learning community. The renewed popularity of this classic algorithm, introduced by \citet{frank1956algorithm}, can be largely attributed to the work of \citet{clarkson2010coresets} and more recently \citet{jaggi2013revisiting}. They generalized its scope and showed that its strong convergence guarantees, efficient greedy updates and sparse iterates are valuable to tackle high-dimensional machine learning problems involving sparsity-inducing (non-smooth) regularization such as the $L_1$ norm and the trace norm. Subsequent work has extended the convergence results, for instance proving faster rates under some additional assumptions \citep[see][]{Lacoste-Julien2015a,Garber2015a,Freund2016a}.

As first-order methods, FW algorithms rely on gradients. In machine learning, computing the gradient of the objective typically requires a full pass over the dataset. To alleviate this computational cost on large datasets, some distributed versions of FW algorithms have recently been proposed for various problems. \citet{Bellet2015b} introduced a communication-efficient distributed FW algorithm for a class of problems under $L_1$ norm and simplex constraints, and provided an MPI-based implementation. \citet{Tran2015a} extend the algorithm to the Stale Synchronous Parallel (SSP) model. \citet{moharrerdistributing} further generalized the class of problems which can be considered (still under $L_1$/simplex constraints) and proposed an efficient and modular implementation in Apache Spark (similar to what we propose in the present work for trace norm problems).
\citet{wang2016parallel} proposed a parallel and distributed version of the Block-Coordinate Frank-Wolfe algorithm \citep{Lacoste-Julien2013b} for problems with block-separable constraints. All these methods are designed for specific problem classes and do not apply to trace norm minimization.
For general problems (including trace norm minimization), \citet{wai2017decentralized} introduced a decentralized FW algorithm in which workers communicate over a network graph without master node. The communication steps involve local averages of iterates and gradients between neighboring workers. In the master/slave distributed setting we consider, their algorithm essentially reduces to the naive distributed FW described in Section~\ref{sec:baseline} and hence suffers from the high communication cost induced by transmitting gradients. In contrast to the above approaches, our work proposes a communication-efficient distributed FW algorithm for trace norm minimization.
In parallel to our work, \citet{wai2017fast} recently introduced an extension of the approach by \citet{wai2017decentralized} specialized to trace norm minimization. Their method also relies on the power method, but their work is complementary to ours. In particular, they consider the gossip communication setting and derive convergence results of different nature. Their focus is on the theoretical analysis: they do not discuss an efficient implementation of their algorithm, and the experiments are merely illustrative (simulated runs in a centralized environment). 


Another direction to scale up FW algorithms to large datasets is to consider stochastic variants, where the gradient is replaced by an unbiased estimate computed on a mini-batch of data points \citep{Hazan2012a,Lan2016a,Hazan2016a}. The price to pay is a slower theoretical convergence rate, and in practice some instability and convergence issues have been observed \citep[see e.g.,][]{liu2017approximate}. The experimental results of \citet{moharrerdistributing} show that current stochastic FW approaches do not match the performance of their distributed counterparts. Despite these limitations, this line of work is largely complementary to ours: when the number of workers $N$ is small compared to the training set size $n$, each worker could compute an estimate of its local gradient to further reduce the computational cost. We leave this for future work.

We conclude this section by mentioning that other kinds of distributed algorithms have been proposed for special cases of our general problem \eqref{eq:tn-opt}. In particular, for the matrix completion problem, \citet{mackey2011divide} proposed a divide-and-conquer strategy, splitting the input matrix into submatrices, solving each subproblem in parallel with an existing matrix completion algorithm, and then combining the results.


\section{Experiments}
\label{sec:exp}

In this section, we validate the proposed approach through experiments on two tasks: multi-task least square regression and multinomial logistic regression (see Section~\ref{sec:app}). We use both synthetic and real-world datasets.

\subsection{Experimental Setup}

\noindent\textbf{Environment.}
We run our Spark implementation described in Section~\ref{sec:imple} on a cluster with 5 identical machines, with Spark 1.6 deployed in standalone mode. One machine serves as the driver (master) and the other four as executors (workers). Each machine has 2 Intel Xeon E5645 2.40GHz CPUs, each with 6 physical cores. Each physical core has 2 threads. Therefore, we have 96 logical cores available as workers. The Spark cluster is configured to use all 96 logical cores unless otherwise stated. Each machine has 64GB RAM: our Spark deployment is configured to use 60GB, hence the executors use 240GB in total. The network card has a speed of 1Gb/s. The BLAS version does not enable multi-threading.\\

\noindent\textbf{Datasets.}
For multi-task least square, we experiment on synthetic data generated as follows. The ground truth $W$ has rank 10 and trace norm equal to 1 (we thus set $\mu=1$ in the experiments). This is obtained by multiplying two arbitrary orthogonal matrices and a sparse diagonal matrix. $X$ is generated randomly, with each coefficient following a Gaussian distribution, and we set $Y = XW$. We generate two versions of the dataset: a low-dimensional dataset ($n=10^5$ samples, $d=300$ features and $m=300$ tasks) and a higher dimensional one ($n=10^5$, $d=1,000$ and $m=1,000$).
For multinomial logistic regression, we use a synthetic and a real dataset. The synthetic dataset has $n=10^5$ samples, $p=1,000$ features and $m=1,000$ classes. The generation of $W$ and $X$ is the same as above, with the label vector $y$ set to the one yielding the highest score for each point. The test set has $10^5$ samples. Our real-world dataset is ImageNet from ILSVRC2012 challenge \citep{Deng2009a,ILSVRC15}, which has $n=1,281,167$ training images in $m=1,000$ classes. We use the learned features of dimension $p=2048$ extracted from the deep neural network ResNet50 \citep{he2016deep} provided by Keras.\footnote{\url{https://github.com/fchollet/keras}} The validation set of the competition ($50,000$ images) serves as the test set.
\\

\noindent\textbf{Compared methods.}
We compare the following algorithms: \naive, \sva (the baselines described in Section~\ref{sec:baseline}) and three variants of our algorithm, \dfw-1, \dfw-2 and \dfw-log (resp. using 1, 2 and $O(\log t)$ power iterations at step $t$). We have also experimented with \dfw with $K(t)=O(t)$, but observed empirically that far fewer power iterations are sufficient in practice to ensure good convergence. We have also used SVA as warm start to the power iterations within \dfw, which marginally improves the performance of \dfw. We do not show these variants on the figures for clarity.

\subsection{Results}

\begin{figure}[t]
\centering
\includegraphics[width=.45\textwidth]{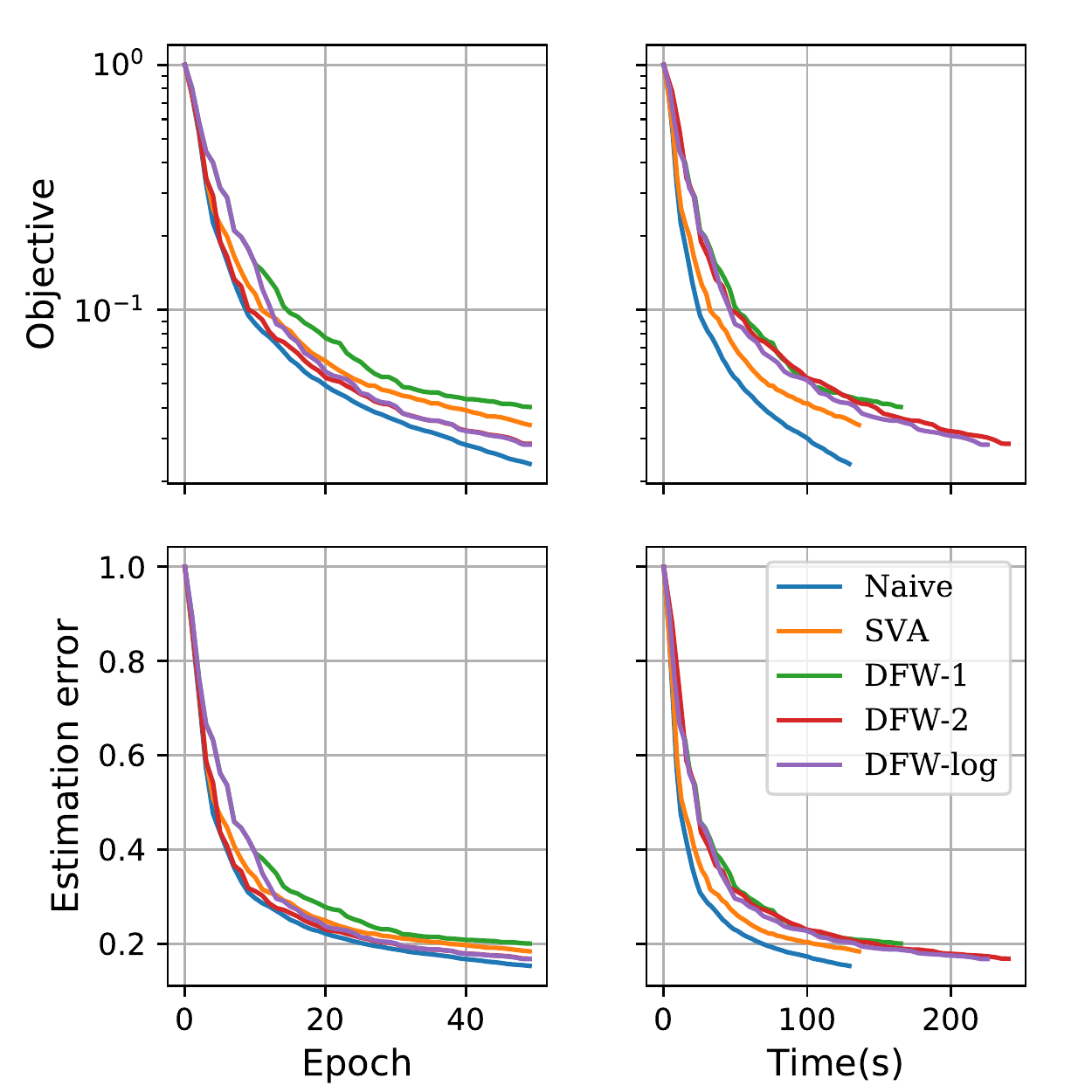} 
\includegraphics[width=.45\textwidth]{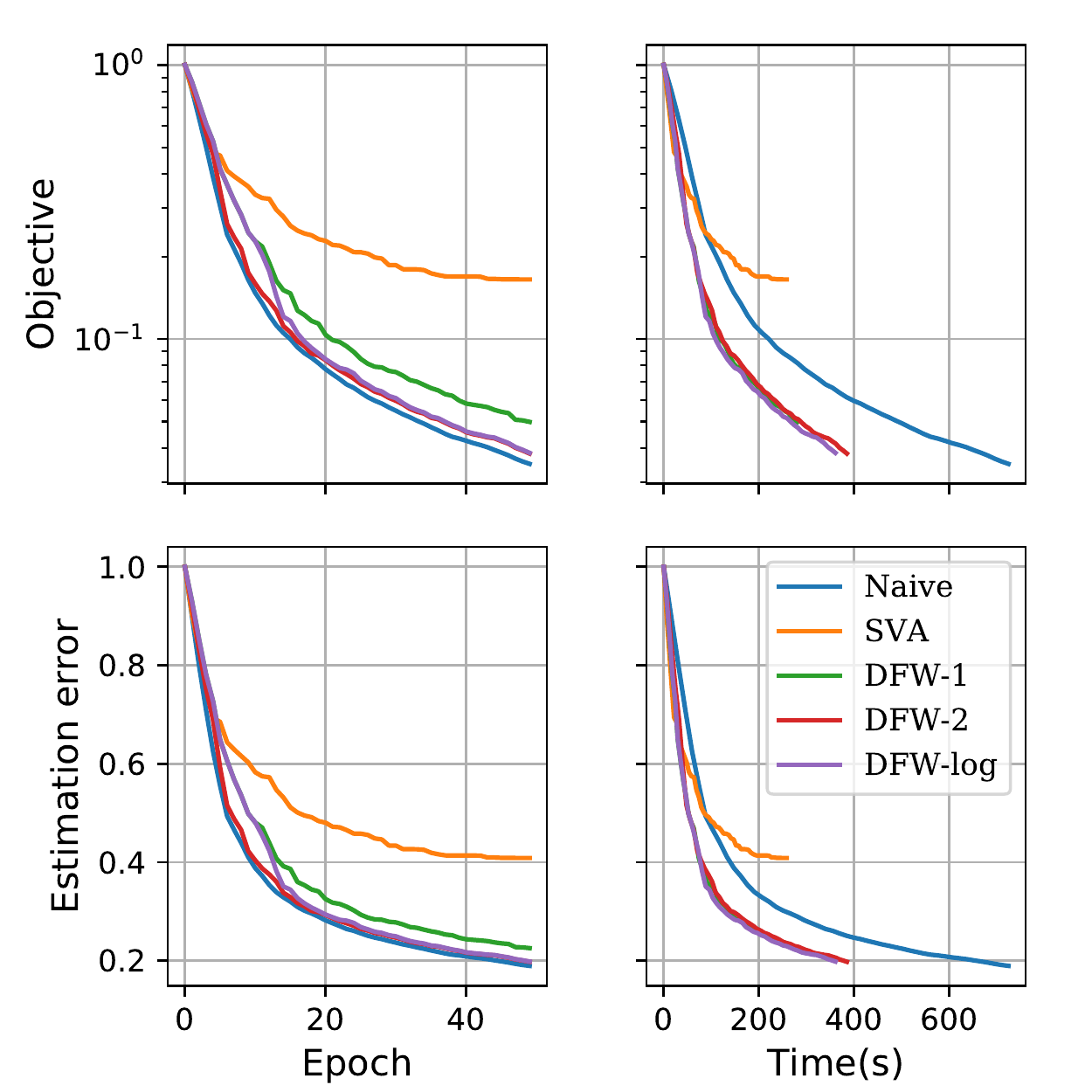} 
\caption{Results for multi-task least square regression. Left: low-dimensional dataset ($n=10^5$, $d=300$ and $m=300$). Right: higher-dimensional dataset ($n=10^5$, $d=1,000$ and $m=1,000$).}
\label{fig:mls}
\end{figure}

\noindent\textbf{Multi-task least square.} For this task, we simply set the number of power iterations of \dfw-log to $K(t)=\lfloor1+\log(t)\rfloor$. All algorithms use line search. Figure~\ref{fig:mls} shows the results for all methods on the low and high-dimensional versions of the dataset. The performance is shown with respect to the number of epochs and runtime, and for two metrics: the value of the objective function and the estimation error (relative Frobenius distance between the current $W$ and the ground truth). On this dataset, the estimation error behaves similarly as the objective function.
As expected, \naive performs the best with respect to the number of epochs as it computes the exact solution to the linear subproblem. On the low-dimensional dataset (left panel), it also provides the fastest decrease in objective/error. \sva also performs well on this dataset. However, when the dimension grows (right panel) the accuracy of \sva drops dramatically and \naive becomes much slower due to the increased communication cost. This confirms that these baselines do not scale well with the matrix dimensions. On the other hand, all variants of \dfw perform much better than the baselines on the higher-dimensional dataset. This gap is expected to widen as the matrix dimensions increase. Remarkably, only 2 power iterations are sufficient to closely match the reduction in objective function achieved by the exact solution on this task. One can see the influence of the number of power iterations on the progress per epoch (notice for instance the clear break at iteration $10$ when \dfw-log switches from $1$ to $2$ power iterations), but this has a cost in terms of runtime. Overall, all variants of \dfw reduce the objective/error at roughly the same speed. On a smaller scale version of the dataset, we verified that the gradients are well-behaved in the sense of Theorem~\ref{thm:conv-exp}: the average ratio between the two largest singular values over 100 epochs was found to be $0.86$.
\\

\begin{figure}[t]
\centering
\includegraphics[width=.325\textwidth]{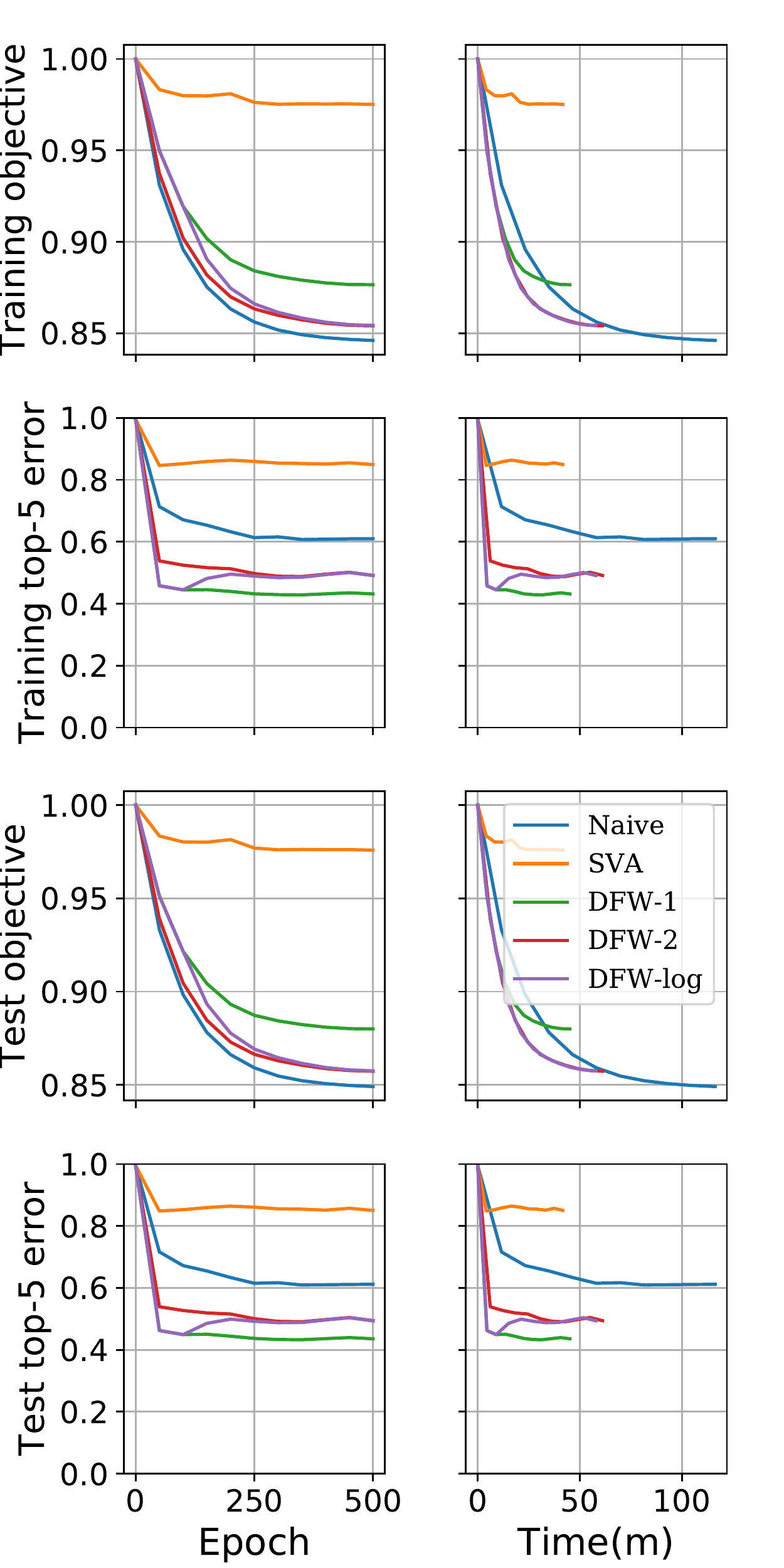} 
\includegraphics[width=.325\textwidth]{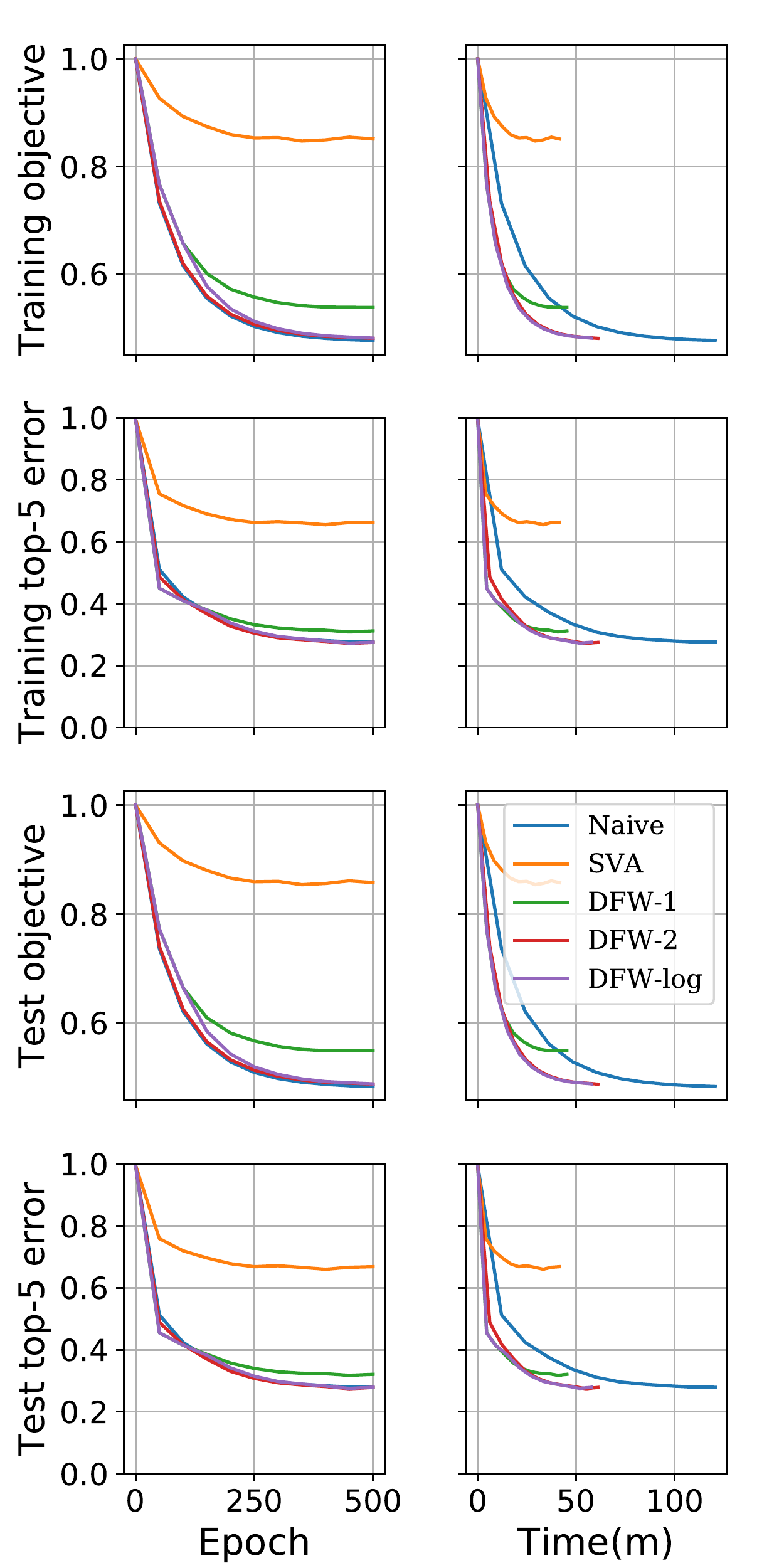} 
\includegraphics[width=.325\textwidth]{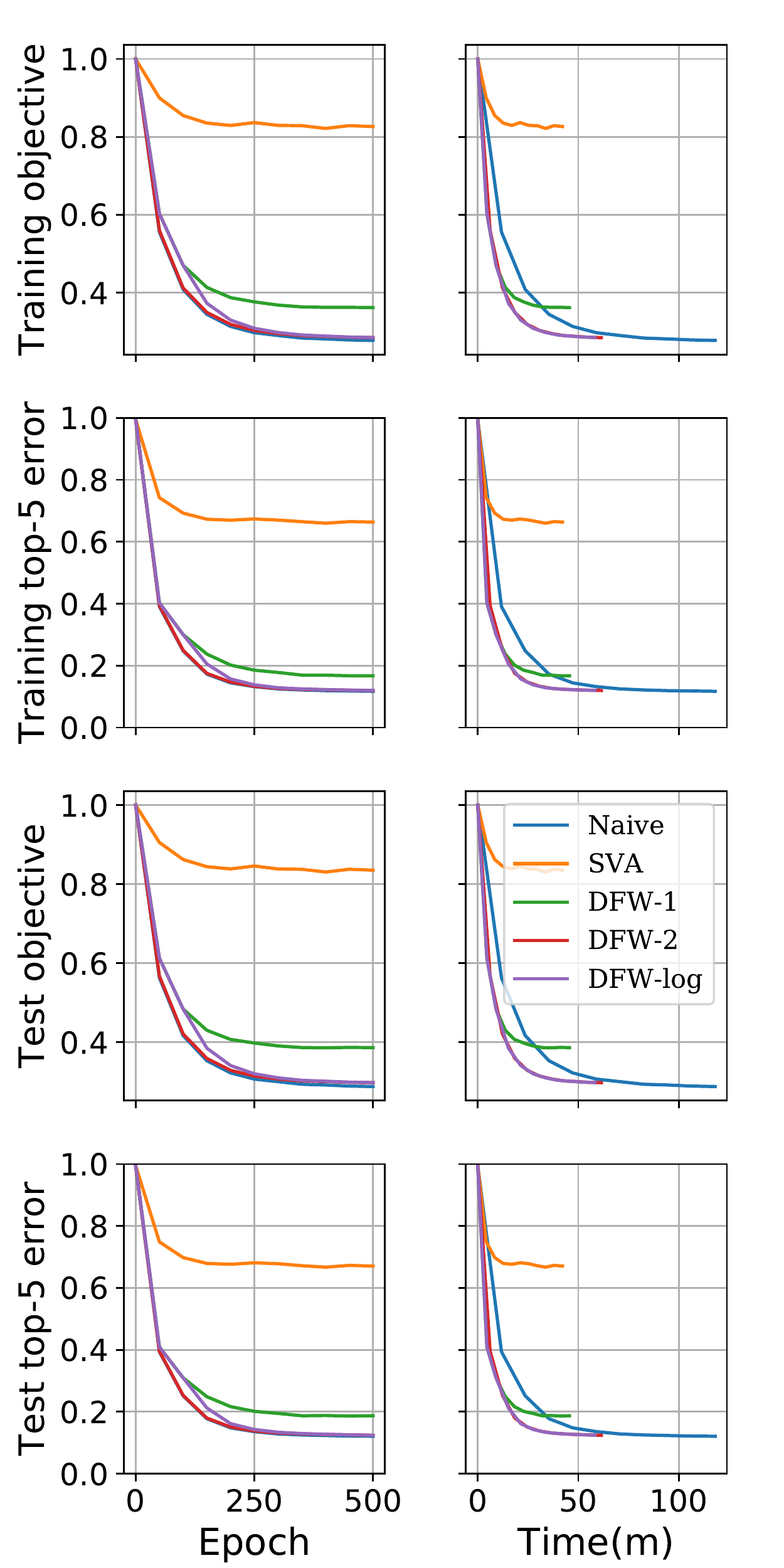} 
\caption{Results for multinomial logistic regression (synthetic data) for several values of $\mu$. Left: $\mu = 10$. Middle: $\mu = 50$. Right: $\mu = 100$. The error stands for the top-5 misclassification rate.}
\label{fig:mlr}
\end{figure}

\begin{figure}[t]
\centering
\includegraphics[width=.47\textwidth]{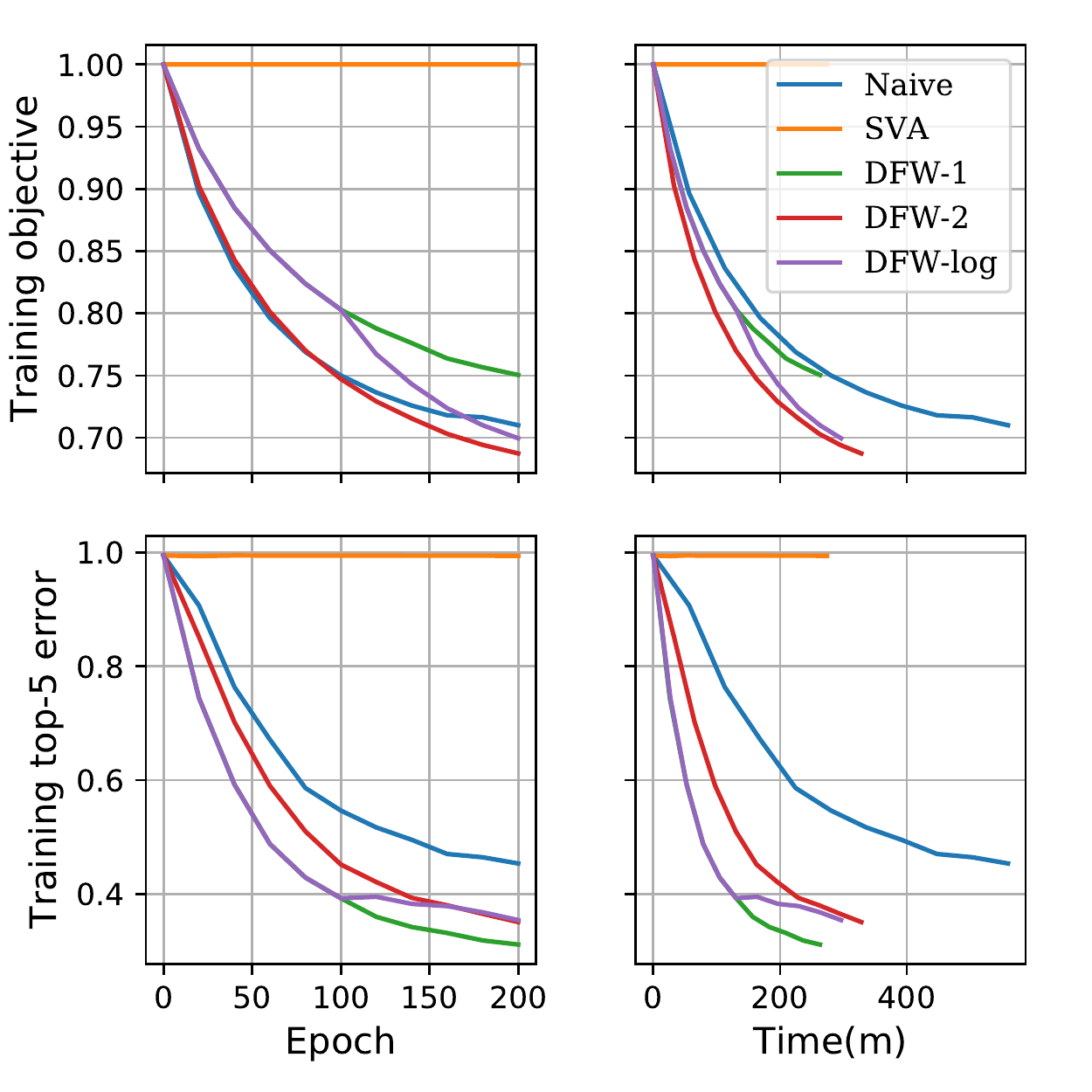} 
\includegraphics[width=.47\textwidth]{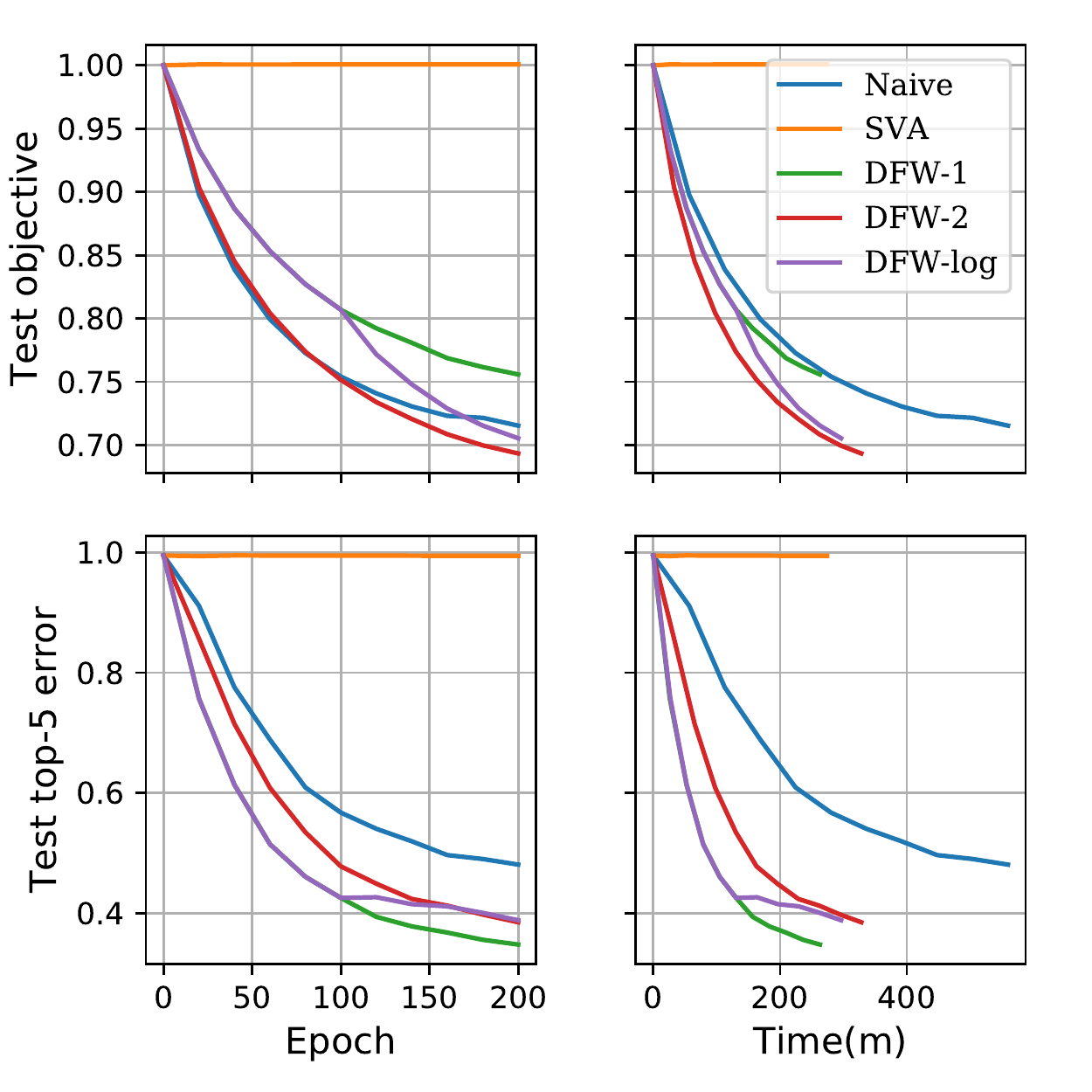} 
\caption{Results for multinomial logistic regression (ImageNet dataset). The error stands for the top-5 misclassification rate.}
\label{fig:imagenet}
\end{figure}

\begin{figure}[t]
\centering
\includegraphics[width=.37\textwidth]{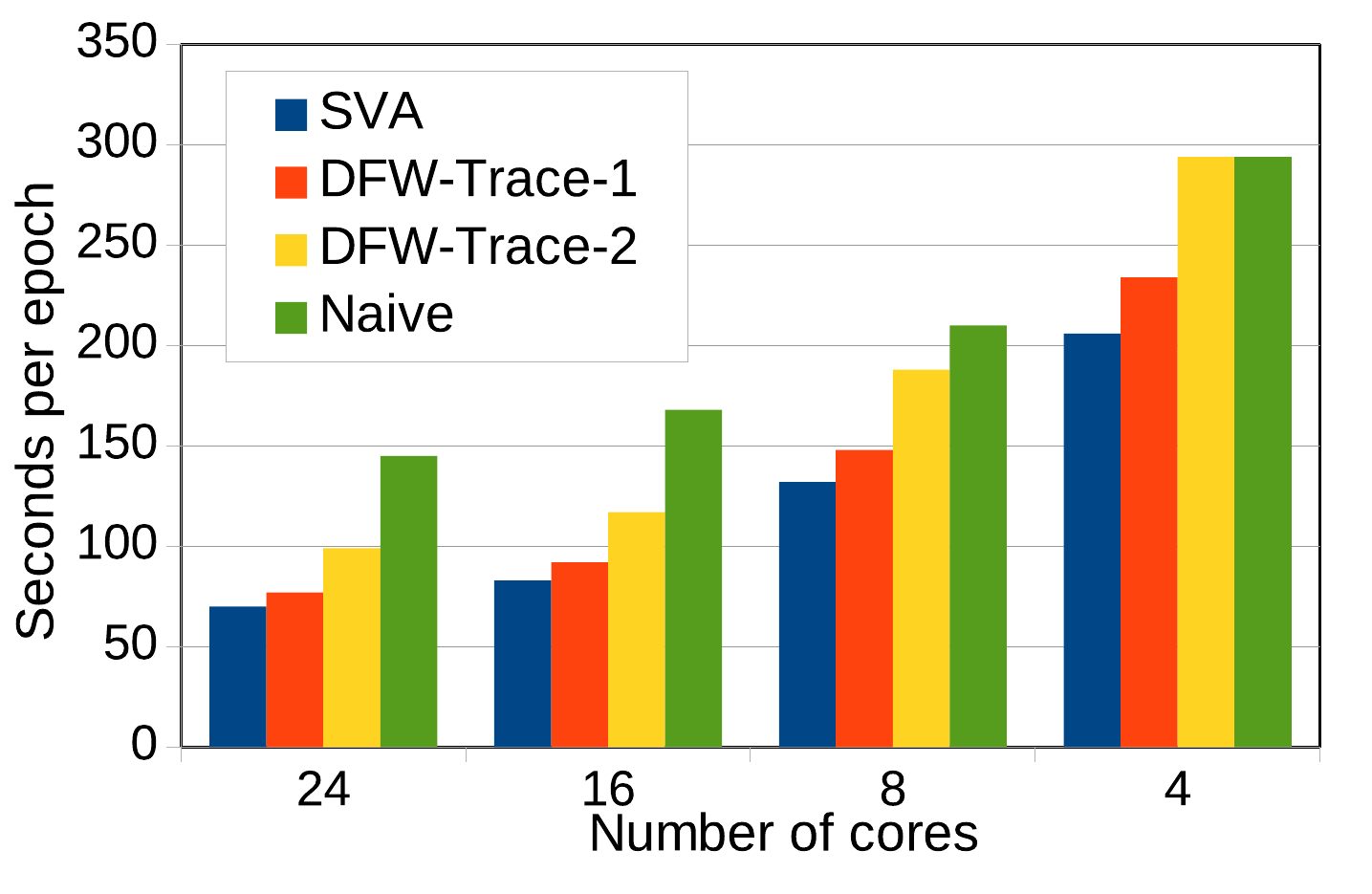}
\includegraphics[width=.37\textwidth]{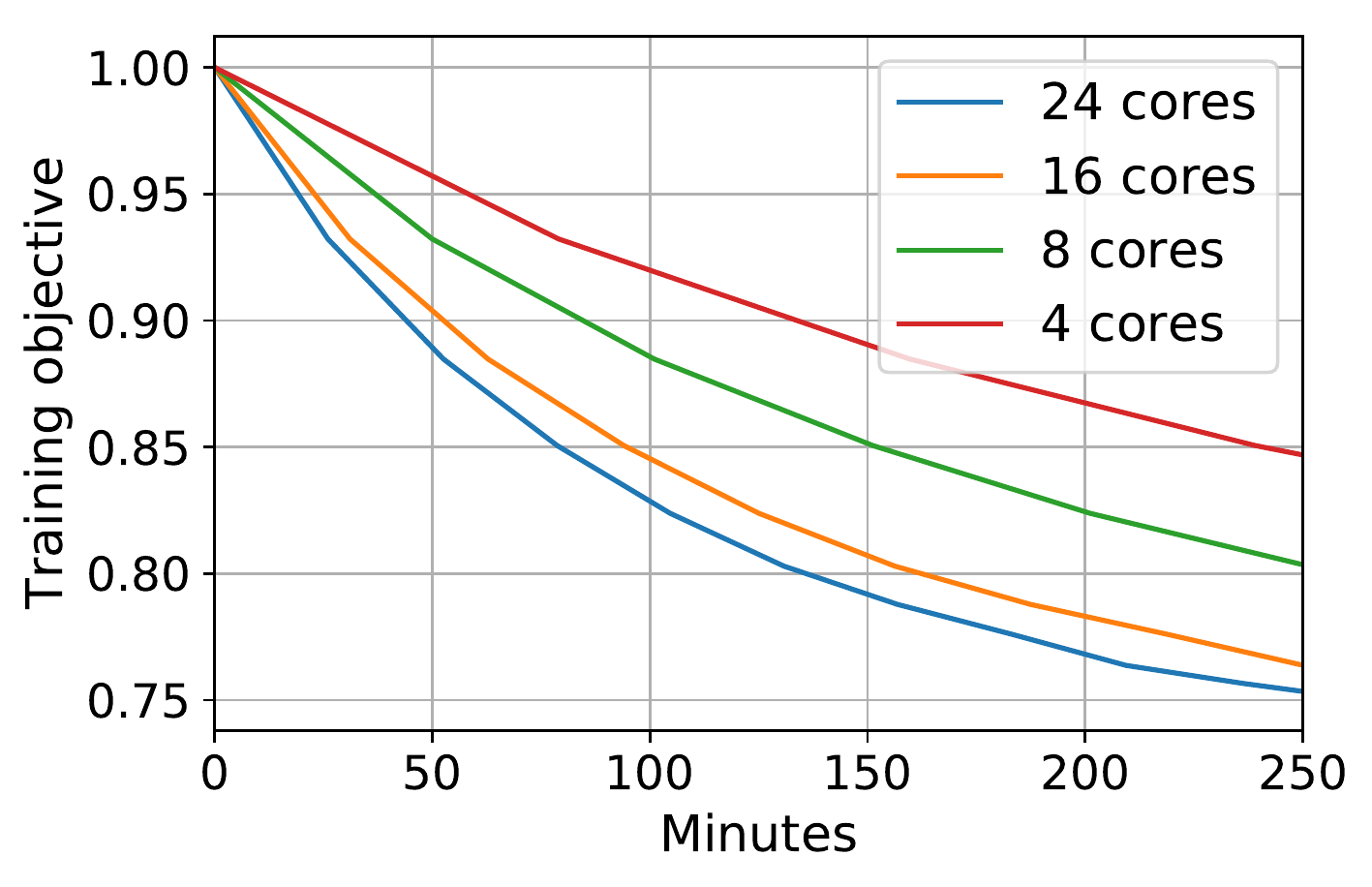}
\caption{Speed-ups with respect to the number of cores (ImageNet dataset). Left: time per epoch. Right: objective value with respect to runtime for \dfw-1.}
\label{fig:speed}
\end{figure}

\noindent\textbf{Multinomial logistic regression.}
Here, all algorithms use a fixed step size as there is no closed-form line search. As we observed empirically that this task requires a larger number of FW iterations to converge, we set $K(t)=\lfloor1+0.5\log(t)\rfloor$ for \dfw-log so that the number of power iterations does not exceed 2 as in the previous experiment.
Figure~\ref{fig:mlr} shows the results on the synthetic dataset for several values of$\mu$ (the upper bound on the trace norm). They are consistent with those obtained for multi-task least square. In particular, \sva achieves converges to a suboptimal solution, while \naive converges fast in terms of epochs but its runtime is larger than \dfw. \dfw-2 and \dfw-log perform well across all values of $\mu$: this confirms that very few power iterations are sufficient to ensure good convergence. For more constrained problems ($\mu=10$), the error does not align very well with the objective function and hence optimizing the subproblems to lower accuracy with \dfw-1 works best.

We now turn to the ImageNet dataset. The results for $\mu=30$ with $24$ cores are shown on Figure~\ref{fig:imagenet}.\footnote{The relative performance of the methods is the same for other values of $\mu$. We omit these detailed results due to the lack of space.}
Again, the \dfw variants clearly outperform \naive and \sva. While \dfw-2 and \dfw-log reduce the objective value faster than \dfw-1, the latter reduces the error slightly faster. When run until convergence, all variants converge to state-of-the-art top-5 misclassification rate with these features (around $0.13$, on par with the pre-trained deep neural net provided by Keras).

We conclude these experiments by investigating the speed-ups obtained when varying the number of cores on the ImageNet dataset. As seen on the left panel of Figure~\ref{fig:speed}, the time per epoch nicely decreases with the number of cores (with diminishing returns, as expected in distributed computing). The right panel of Figure~\ref{fig:speed} illustrates this effect on the convergence speed for \dfw-1.


\section{Conclusion}
\label{sec:conclu}

In this work, we introduced a distributed Frank-Wolfe algorithm for learning high-dimensional low-rank matrices from large-scale datasets. Our \dfw algorithm is communication-efficient, enjoys provable convergence rates and can be efficiently implemented in map-reduce operations. We implemented \dfw as a Python toolbox relying on the Apache Spark distributed programming framework, and showed that it performs well on synthetic and real datasets.

In future work, we plan to investigate several directions. First, we would like to study whether faster theoretical convergence can be achieved under additional assumptions. Second, we wonder whether our algorithm can be deployed in GPUs and be used in neural networks with back-propagated gradients. Finally, we hope to explore how to best combine the ideas of distributed and stochastic Frank-Wolfe algorithms.

\section*{Acknowledgements}
This work was partially supported by ANR Pamela (grant ANR-16-CE23-0016-01) and by a grant from CPER Nord-Pas de Calais/FEDER DATA Advanced data science and technologies 2015-2020.
The first author would like to thank Ludovic Denoyer, Hubert Naacke, Mohamed-Amine Baazizi, and the engineers of LIP6 for their help during the deployment of the cluster.

\bibliographystyle{spbasic}      
\bibliography{main}   

\begin{appendices}

\section{Proof of Theorem~\ref{thm:conv-exp}}
\label{sec:proof1}

Notice that our distributed version of the power method used in \dfw (Algorithm~\ref{alg:dfw}, lines~\ref{line:pmstart}--\ref{line:pmend}) exactly corresponds to the serial power method applied to the full gradient $\nabla F(W^t)$. Hence \dfw performs the same steps as a centralized Frank-Wolfe algorithm that would use the power method to approximately solve the subproblems. We will thus abstract away the details related to the distributed setting (e.g., how the data is split, how parallel computation is organized): our analysis consists in characterizing the approximation error incurred by the power method and showing that this error is small enough to ensure that the Frank-Wolfe algorithm converges in expectation.


We start by establishing that if the linear subproblem is approximately solved in expectation (to sufficient accuracy), then the standard Frank-Wolfe algorithm converges sublinearly in expectation (in the sense of Definition~\ref{def:conv}).

\begin{lemma}
\label{lem:conv-exp}
Let $\delta \ge 0$ be an accuracy parameter. If at each step $t\geq 0$, the linear subproblem is approximately solved in expectation, i.e. we find a random variable $\hat{S}$ such that
\begin{equation}
\label{eq:sub-exp}
\langle \mathbb{E}[\hat{S}|W^t], \nabla F(W^t) \rangle \le \min_{S\in\mathcal{D}} \langle S, \nabla F(W^t) \rangle + \tfrac{1}{2} \delta \gamma^t C_F,
\end{equation}
then the Frank-Wolfe algorithm converges sublinearly in expectation.
\end{lemma}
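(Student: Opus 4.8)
The plan is to mimic the standard Frank-Wolfe descent-lemma argument from \citet{jaggi2013revisiting}, but carrying an expectation through and absorbing the subproblem inexactness into the curvature term. First I would recall the definition of the curvature constant $C_F$: for any $W, S \in \mathcal{D}$ and $\gamma \in [0,1]$, with $W' = (1-\gamma)W + \gamma S$ one has $F(W') \le F(W) + \gamma \langle S - W, \nabla F(W) \rangle + \tfrac{1}{2}\gamma^2 C_F$. Applying this at step $t$ with $W = W^t$, $\gamma = \gamma^t$, and $S = \hat S$ (the random point returned by the approximate linear oracle, so that $W^{t+1} = (1-\gamma^t)W^t + \gamma^t \hat S$) gives
\begin{equation*}
F(W^{t+1}) \le F(W^t) + \gamma^t \langle \hat S - W^t, \nabla F(W^t)\rangle + \tfrac{1}{2}(\gamma^t)^2 C_F.
\end{equation*}

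Next I would take conditional expectation given $W^t$. Since $W^t$ and $\gamma^t$ are determined by the history, the only random quantity on the right is $\hat S$, and by hypothesis \eqref{eq:sub-exp} we have $\langle \mathbb{E}[\hat S \mid W^t], \nabla F(W^t)\rangle \le \min_{S\in\mathcal D}\langle S, \nabla F(W^t)\rangle + \tfrac12\delta\gamma^t C_F \le \langle W^*, \nabla F(W^t)\rangle + \tfrac12\delta\gamma^t C_F$. Combining with convexity of $F$, namely $\langle W^* - W^t, \nabla F(W^t)\rangle \le F(W^*) - F(W^t)$, and writing the suboptimality gap $h(W) := F(W) - F(W^*)$, I obtain the recursion
\begin{equation*}
\mathbb{E}[h(W^{t+1}) \mid W^t] \le (1-\gamma^t) h(W^t) + \tfrac12(\gamma^t)^2 C_F + \tfrac12\delta(\gamma^t)^2 C_F = (1-\gamma^t) h(W^t) + \tfrac12(1+\delta)(\gamma^t)^2 C_F.
\end{equation*}
Taking a further (total) expectation and setting $C := (1+\delta)C_F$, this is exactly the recursion $\mathbb{E}[h(W^{t+1})] \le (1-\gamma^t)\mathbb{E}[h(W^t)] + \tfrac12(\gamma^t)^2 C$ with $\gamma^t = \tfrac{2}{t+2}$, which is the deterministic Frank-Wolfe recursion with curvature $C$ in place of $C_F$.

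Finally I would close the induction exactly as in the proof of Theorem~\ref{thm:fw}: show by induction on $t$ that $\mathbb{E}[h(W^t)] \le \tfrac{2C}{t+2} = \tfrac{2C_F}{t+2}(1+\delta)$, which is precisely \eqref{eq:con-exp} and hence sublinear convergence in expectation. The base case ($t=1$, or handled via the standard $t=0$ step with $\gamma^0=1$ giving $h(W^1) \le \tfrac12 C \le \tfrac{2C}{3}$) and the inductive step $(1-\tfrac{2}{t+2})\tfrac{2C}{t+2} + \tfrac12 C (\tfrac{2}{t+2})^2 = \tfrac{2C t}{(t+2)^2} + \tfrac{2C}{(t+2)^2} = \tfrac{2C(t+1)}{(t+2)^2} \le \tfrac{2C}{t+3}$ are routine algebra, using $(t+1)(t+3) \le (t+2)^2$. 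I do not expect any real obstacle here, since the argument is a direct adaptation of the classical one; the only point requiring minor care is making sure the conditioning is handled cleanly — that $\gamma^t$ is deterministic and $W^t$-measurable so the curvature term is not affected by the expectation, and that the line-search variant is covered because line search can only do better than the fixed step size $\gamma^t = 2/(t+2)$ at each step, so the same upper bound on $F(W^{t+1})$ continues to hold.
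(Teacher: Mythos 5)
Your proposal is correct and follows essentially the same argument as the paper: the curvature-constant descent inequality, conditional expectation combined with the approximate-oracle condition \eqref{eq:sub-exp}, and the standard induction on $\mathbb{E}[F(W^t)-F(W^*)]$ with the enlarged constant $(1+\delta)C_F$. The only cosmetic difference is that you bound $\min_S\langle S-W^t,\nabla F(W^t)\rangle$ via the feasible point $W^*$ and convexity directly, whereas the paper phrases the same step through the duality gap $G(W)\ge F(W)-F(W^*)$.
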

\begin{proof}
At any step $t$, given $W^t$ we set $W^{t+1} = W^t + \gamma^t (\hat{S} - W^t)$ with arbitrary step size $\gamma^t \in [0,1]$. From the definition of the curvature constant $C_F$ \citep{jaggi2013revisiting}:
\[
F(W^{t+1}) \le F(W^t) + \gamma^t \langle \hat{S}-W^t, \nabla F(W^t) \rangle + \tfrac{(\gamma^t)^2}{2} C_F .
\]
We can now take conditional expectation on both sides and use \eqref{eq:sub-exp} to get
\begin{align*}
\mathbb{E}[F(W^{t+1})|W^t] &\le F(W^t) + \gamma^t \langle \mathbb{E}[\hat{S}|W^t]-W^t, \nabla F(W^t) \rangle + \tfrac{(\gamma^t)^2}{2} C_F \\
& \le F(W^t) + \gamma^t \left(\min_{S\in\mathcal{D}} \langle S - W^t, \nabla F(W^t) \rangle  \right) + \tfrac{(\gamma^t)^2}{2} C_F (1 + \delta) \\
& \le F(W^t) - \gamma^t G(W^t) + (\gamma^t)^2 C,
\end{align*}
where we denote $G(W) := \max_{S\in\mathcal{D}} \langle W-S, \nabla F(W) \rangle$ and $C:=\tfrac{C_F}{2}(1 + \delta)$. The function $G(W)$ is known as the \emph{duality gap} and satisfies $F(W)-F(W^*) \le G(W)$ --- see \cite{jaggi2013revisiting} for details.
Denoting $H(W):=F(W)-F(W^*)$, we have
\begin{align*}
\mathbb{E}[H(W^{t+1})|W^t] & \le H(W^t) - \gamma^t G(W^t) + (\gamma^t)^2 C \\
& \le H(W^t) - \gamma^t H(W^t) + (\gamma^t)^2 C \\
& = (1-\gamma^t)H(W^t) + (\gamma^t)^2 C,
\end{align*}
where we use the duality $H(x) \le G(x)$.

We shall use induction over $t$ to prove the sublinear convergence in expectation \eqref{eq:con-exp}, i.e., we want to show that
\[
\mathbb{E}[H(W^t)] \le \tfrac{4C}{t+2}, \quad \text{for } t=1,2,...
\]
We prove this for the default step size $\gamma^t=\frac{2}{t+2}$ (we can easily prove the same thing for the line search variant, as the resulting iterates always achieve a lower objective than with the default step size).
For $t=1$, we have $\gamma^0 = \tfrac{2}{0+2}=1$. For any $W \in \mathcal{D}$, we have $H(W) \le \tfrac{C_F}{2} < C < \tfrac{4}{3}C$. This proves the case of $t=1$.
Consider now $t\geq 2$, then
\begin{align*}
\mathbb{E} [H(W^{t+1})] = \mathbb{E}[\mathbb{E} [H(W^{t+1})|W^t]] & \le (1-\gamma^t) \mathbb{E} [H(W^t)] + (\gamma^t)^2 C \\
& \le \left(1- \tfrac{2}{t+2} \right) \tfrac{4C}{t+2} + \left( \tfrac{2}{t+2} \right)^2 C.
\end{align*}
Simply rearranging the terms gives
\[
\mathbb{E} [H(W^{t+1})] 
\le \tfrac{4(t+1)C}{(t+2)^2} 
< \tfrac{4(t+1)C}{(t+1)(t+3)} 
= \tfrac{4C}{t+3} .
\]
This concludes the proof. 
\end{proof}

Based on Lemma~\ref{lem:conv-exp}, in order to prove Theorem~\ref{thm:conv-exp} we need to quantify the number of power method iterations needed to achieve the desired accuracy \eqref{eq:sub-exp} for the linear subproblems. We will rely on some results from \citet[][Theorem~3.1 therein]{kuczynski1992estimating}, which we recall in the lemma below.

\begin{lemma}[\citealp{kuczynski1992estimating}]
\label{lem:eigen}
Let $A\in\mathbb{R}^{m \times m}$ be any symmetric and positive definite matrix, and $b$ be a random vector chosen uniformly on the unit sphere (with $P$ the corresponding probability measure). Denote by $\lambda_1$ the largest eigenvalue of $A$ and by $\xi=\xi(A, b, K)$ the estimate given by $K$ power iterations. We define its average relative error $e(\xi)$ as
\begin{equation*}
e(\xi) := \int_{\lVert b \rVert = 1} \left| \frac{\xi-\lambda_1}{\lambda_1} \right| P (db).
\end{equation*}
Then for any $K\ge 2$ and $m \ge 8$, regardless of $A$, we have
\begin{equation*}
e(\xi) \le \alpha(m) \dfrac{\ln m}{K-1},
\end{equation*}
where $\pi^{-1/2} \le \alpha(m) \le 0.871$ and, for large $m$, $\alpha(m) \approx \pi^{-1/2} \approx 0.564$. 

Moreover, if $\lambda$ has multiplicity 1, denoting the second largest eigenvalue by $\lambda_2$, then there exists a constant $\widetilde{K}$, so that for any $K > \widetilde{K}$, we have
\begin{equation*}
e(\xi) \le m \left(\frac{\lambda_2}{\lambda_1}\right)^{K-1} .
\end{equation*}
\end{lemma}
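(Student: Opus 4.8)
The plan is to reduce the lemma to a one-dimensional extremal problem over the spectrum of $A$ and then estimate the resulting Beta integral, treating the gap-free bound and the gapped bound separately. First I would diagonalize $A = \sum_{i=1}^m \lambda_i q_i q_i^\top$ with orthonormal eigenvectors and $\lambda_1 \ge \dots \ge \lambda_m > 0$, and write $b = \sum_i \beta_i q_i$. Because the eigenbasis is orthonormal and $b$ is uniform on the sphere, rotation invariance makes the coordinate vector $(\beta_1,\dots,\beta_m)$ uniform on the sphere as well; in particular $\beta_1^2 \sim \mathrm{Beta}(\tfrac12, \tfrac{m-1}{2})$ and, more generally, $(\beta_i^2)$ follows a $\mathrm{Dirichlet}(\tfrac12,\dots,\tfrac12)$ law (equivalently $\beta_i = g_i/\lVert g\rVert$ with $g_i$ i.i.d.\ standard Gaussian). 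By scale invariance of the relative error I may set $\lambda_1 = 1$. Taking $\xi$ to be the Rayleigh quotient of $A^{K-1}b$, one has $\xi \le \lambda_1$, so $|\xi-\lambda_1|/\lambda_1 = 1-\xi$ and the exact identity
\[
1-\xi = \frac{\sum_{i\ge 2}\lambda_i^{2K-2}(1-\lambda_i)\,\beta_i^2}{\sum_{i\ge1}\lambda_i^{2K-2}\,\beta_i^2},
\]
in which the $i=1$ term drops from the numerator. Everything then reduces to bounding the expectation of this ratio over $\beta$, uniformly over admissible spectra.

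For the gapped case (the ``moreover'' part), where $\lambda_1$ is simple and $\lambda_i \le q := \lambda_2/\lambda_1 < 1$ for $i\ge 2$, I would keep only the leading term $\beta_1^2$ in the denominator and use $1-\lambda_i \le 1$ together with $\lambda_i^{2K-2}\le q^{2K-2}$ to obtain $1-\xi \le \min\!\big(1,\, q^{2K-2}/\beta_1^2\big)$; retaining the truncation at $1$ is essential because $\mathbb{E}[1/\beta_1^2] = \infty$. Integrating this against the $\mathrm{Beta}(\tfrac12,\tfrac{m-1}{2})$ density of $\beta_1^2$ splits into $P(\beta_1^2 < q^{2K-2})$ and $q^{2K-2}\,\mathbb{E}[\beta_1^{-2}\mathbf{1}(\beta_1^2\ge q^{2K-2})]$; since the density behaves like $u^{-1/2}$ near $0$ with normalizing constant of order $\sqrt m$, both pieces are of order $\sqrt m\, q^{K-1}$, which is bounded by $m\,q^{K-1}$ once $K$ exceeds the absolute constant $\widetilde K$ needed to discard the lower-order contribution of the $(1-u)^{(m-3)/2}$ factor (and the $\sqrt m$-versus-$m$ slack). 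This yields $e(\xi) \le m\,(\lambda_2/\lambda_1)^{K-1}$.

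The gap-free bound is the main obstacle, precisely because no truncation-by-$q$ is available and the naive per-coordinate estimate $\lambda_i^{2K-2}(1-\lambda_i)\le \max_x x^{2K-2}(1-x) = O(1/K)$ only delivers the much weaker $O(\sqrt{m}/K)$ (or $O(1/\sqrt K)$) rate after integration. To recover the sharp $\ln m/(K-1)$ I would follow \citet{kuczynski1992estimating} and first reduce $\sup_A \mathbb{E}[1-\xi]$ to an extremal problem over the empirical distribution of the subdominant eigenvalues: using the Dirichlet/Gaussian representation and a convexity/exchange argument, the worst case is attained by a spectrum concentrating its mass so as to turn the numerator into an integral $\int_0^1 \lambda^{2K-2}(1-\lambda)\,d\nu(\lambda)$ against a maximizing measure $\nu$. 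Carrying out the pointwise-in-$\beta$ optimization and then the expectation, the decisive feature is that retaining the \emph{full} denominator makes the $\lambda$-integral behave like $\int^1 d\lambda/(1-\lambda)$-type contributions, which is exactly what produces a logarithm of the effective dimension rather than a power of $m$; the factor $K-1$ comes from the width $O(1/K)$ of the region $\lambda \approx 1-1/(2K)$ that dominates $\lambda^{2K-2}(1-\lambda)$.

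Finally, the explicit constants $\pi^{-1/2}\le\alpha(m)\le 0.871$ and the asymptotics $\alpha(m)\to\pi^{-1/2}$ would follow from a sharp Laplace-type evaluation of the extremal Beta integral identified above, tracking $B(\tfrac12,\tfrac{m-1}{2}) \sim \sqrt{2\pi/m}$ and the location of the maximizer; the hypotheses $m\ge 8$ and $K\ge 2$ are exactly what is needed for these constant estimates to hold uniformly rather than only asymptotically. I expect the convexity/exchange step isolating the worst-case measure $\nu$, and the verification that it yields a logarithm rather than a polynomial factor in $m$, to be the technically delicate heart of the argument; the remainder is careful but essentially routine Beta-integral asymptotics.
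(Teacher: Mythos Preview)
The paper does not prove this lemma: it is quoted verbatim from \citet[Theorem~3.1]{kuczynski1992estimating} and used as a black box in the proof of Theorem~\ref{thm:conv-exp}. There is therefore no ``paper's own proof'' to compare against; you are sketching the original Kuczy\'nski--Wo\'zniakowski argument.

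As a reconstruction of that argument, your outline is broadly on track. The diagonalization, the rotation-invariance reduction to the Dirichlet/Beta law for $(\beta_i^2)$, and the Rayleigh-quotient identity for $1-\xi$ are all correct and standard. Your treatment of the gapped case is essentially complete: the truncation $\min(1,q^{2K-2}/\beta_1^2)$ is the right device to handle the non-integrability of $\beta_1^{-2}$, and the Beta-density computation indeed yields $O(\sqrt{m}\,q^{K-1})$, which is comfortably inside the stated $m\,q^{K-1}$.

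The gap-free part, however, is where your proposal is thin. You correctly identify that the naive per-coordinate bound $\lambda^{2K-2}(1-\lambda)\le O(1/K)$ is too weak, and you gesture at an ``extremal problem over the empirical distribution of the subdominant eigenvalues'' solved by a ``convexity/exchange argument.'' But the actual mechanism in Kuczy\'nski--Wo\'zniakowski is more concrete than what you describe: for each fixed $b$, the supremum over spectra of the ratio $1-\xi$ is attained at an explicit two-point (or degenerate) configuration, and the resulting one-dimensional problem is what yields, after integration over $\beta_1^2\sim\mathrm{Beta}(\tfrac12,\tfrac{m-1}{2})$, an expression whose asymptotics produce the $\ln m/(K-1)$ scaling with the stated constants. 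Your heuristic that ``the $\lambda$-integral behaves like $\int d\lambda/(1-\lambda)$'' is suggestive but not the actual source of the logarithm; the $\ln m$ comes from the \emph{outer} integration over $\beta_1^2$ near zero, not from a $\lambda$-integral. If you intend to write a self-contained proof rather than cite the result, this step needs to be made precise.
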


We introduce a last technical lemma.

\begin{lemma}
\label{lem:lip}
If a differentiable function $F$ is $L$-Lipschitz continuous w.r.t. the trace norm, then for any matrix $W$, all singular values of $\nabla F(W)$ are smaller than $L$.
\end{lemma}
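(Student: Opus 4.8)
The plan is to connect the Lipschitz constant of $F$ with respect to the trace norm to the spectral norm of the gradient via duality of matrix norms. Recall that a differentiable function $F$ is $L$-Lipschitz continuous w.r.t. a norm $\|\cdot\|$ if and only if its gradient is bounded in the dual norm: $\|\nabla F(W)\|_* \le L$ for all $W$, where $\|\cdot\|_*$ here denotes the \emph{dual} norm (not the trace norm — I would be careful with notation in the writeup). The key fact I would invoke is that the dual norm of the trace norm is the operator (spectral) norm: $\sup_{\|Z\|_* \le 1}\langle Z, A\rangle = \sigma_1(A)$, the largest singular value of $A$. Combining these two facts immediately gives $\sigma_1(\nabla F(W)) \le L$, and since $\sigma_1$ is the largest singular value, all singular values of $\nabla F(W)$ are at most $L$.

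First I would recall, or prove in one line, the standard characterization: for differentiable $F$, $L$-Lipschitzness w.r.t.\ $\|\cdot\|$ means $|F(W) - F(W')| \le L\|W - W'\|$ for all $W, W'$; taking $W' = W + tG$ with $G$ a direction, dividing by $t$, and letting $t\to 0$ yields $\langle \nabla F(W), G\rangle \le L\|G\|$, hence $\sup_{\|G\|\le 1}\langle \nabla F(W), G\rangle \le L$, which is exactly $\|\nabla F(W)\|^{\mathrm{dual}} \le L$. Second, I would establish (or cite) that the norm dual to the trace norm $\|A\|_* = \sum_k \sigma_k(A)$ is the spectral norm $\sigma_1(A)$; this follows from the von Neumann trace inequality $\langle Z, A\rangle \le \sum_k \sigma_k(Z)\sigma_k(A)$, which is maximized over $\|Z\|_*\le 1$ by placing all the budget on the top singular direction of $A$. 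Third, I would chain the two: $\sigma_1(\nabla F(W)) = \|\nabla F(W)\|^{\mathrm{dual}} \le L$, and conclude since $\sigma_k(\nabla F(W)) \le \sigma_1(\nabla F(W)) \le L$ for every $k$.

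I do not anticipate a serious obstacle here — the lemma is essentially the statement ``Lipschitz in a norm $\iff$ gradient bounded in the dual norm, and trace norm is dual to spectral norm.'' The only point requiring mild care is making the duality argument rigorous: one should state clearly which norm is meant by the notation and, if desired, give the short proof of the von Neumann / dual-norm identity rather than merely asserting it, since the paper overloads $\|\cdot\|_*$ for the trace norm itself. An alternative, entirely self-contained route avoids naming the dual norm at all: pick the top singular pair $(u_1, v_1)$ of $\nabla F(W)$, set $G = u_1 v_1^\top$ so that $\|G\|_* = 1$ and $\langle \nabla F(W), G\rangle = \sigma_1(\nabla F(W))$, then apply the directional-derivative bound above directly to get $\sigma_1(\nabla F(W)) \le L$. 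This second route is probably the cleanest to present and sidesteps any discussion of dual norms entirely.
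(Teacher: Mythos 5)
Your proposal is correct and follows essentially the same route as the paper's proof: bound the directional derivative of $F$ using Lipschitzness, then invoke the duality between the trace norm and the spectral norm to conclude $\sigma_1(\nabla F(W)) \le L$. The only difference is cosmetic --- the paper passes through the mean value theorem at an intermediate point $X$ and then lets $\Delta W \to 0$, whereas you take directional derivatives at $W$ directly (your version, especially the self-contained variant with $G = u_1 v_1^\top$, is arguably the cleaner write-up).
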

\begin{proof}
For any matrix $W$, the definition of $L$-Lipschitzness implies that
\[
\sup_{\Delta W \neq 0} \dfrac{|F(W+\Delta W) - F(W)|}{\lVert \Delta W \rVert_*} \le L.
\]
According to the mean value theorem, there exists a matrix $X$ between $W$ and $W+\Delta W$ such that
\[
\sup_{\Delta W \neq 0} \left< \nabla F(X), \frac{\Delta W}{\lVert \Delta W \rVert_*} \right> \le L.
\]
Denote the largest singular value of $W$ by $\sigma_1(W)$. Since the spectral norm is the dual norm of the trace norm, we have 
$\sigma_1(\nabla F(X)) \le L$.
Letting $\Delta W \rightarrow 0$, we get
$\sigma_1(\nabla F(W)) \le L$.
\end{proof}

Based on the above intermediary results, we can now prove Theorem~\ref{thm:conv-exp}.
For any $t\geq 0$, denote $A^t:=\nabla F(W^t)$. The largest eigenvalue of ${A^t}^\top A^t$ is the square of the largest singular value of $A^t$, denoted as $\sigma^t_1$. We estimate $(\sigma_1^t)^2$ as $v_{K(t)}^\top {A^t}^\top A^t v_{K(t)}$, where $v_{K(t)}$ is the normalized unit vector after $K(t)$ power iterations.
We also denote $u_{K(t)} := A^tv_{K(t)} / \lVert A^tv_{K(t)} \rVert $.

According to Lemma~\ref{lem:eigen}, we have
\[
\mathbb{E} \left| \dfrac{v_{K(t)}^\top  {A^t}^\top A^t v_{K(t)} - (\sigma_1^t)^2}{(\sigma_1^t)^2} \right| \le  \dfrac{\ln m}{K(t)-1}.
\]

Therefore:
\begin{eqnarray*}
\mathbb{E} \left| \dfrac{\lVert A^tv_{K(t)} \rVert}{\sigma^t_1} - 1 \right|
&\le& \mathbb{E} \left| \dfrac{\lVert A^tv_{K(t)} \rVert}{\sigma^t_1} - 1 \right| \left| \dfrac{\lVert A^tv_{K(t)}\rVert}{\sigma^t_1} + 1 \right|\\
&=& \mathbb{E} \left| \dfrac{\lVert A^tv_{K(t)} \rVert^2}{(\sigma^t_1)^2} - 1 \right| 
\le  \dfrac{\ln m}{K(t)-1}.
\end{eqnarray*}

Let $K(t)=1+\lceil\frac{\mu L (t+2) \ln m}{\delta C_F}\rceil$, we get
\[
\mathbb{E} \left| \dfrac{\lVert A^tv_{K(t)} \rVert}{\sigma^t_1} - 1 \right| 
\le \frac{1}{2} \dfrac{\delta \gamma^t C_F}{\mu L} 
\le \frac{1}{2} \dfrac{\delta \gamma^t C_F}{\mu \sigma^t_1} ,
\]
where the last inequality uses Lemma~\ref{lem:lip}.

Removing the absolute sign and the denominator, we get
\[
\mathbb{E} [\mu (\sigma^t_1 - \lVert A^tv_{K(t)} \rVert)] \le \tfrac{1}{2} \delta \gamma^t C_F .
\]

Rearranging the terms, we obtain
\begin{equation}
\label{eq:sub-proof}
-\mu\mathbb{E} \lVert A^tv_{K(t)} \rVert \le -\mu\sigma^t_1 + \tfrac{1}{2} \delta \gamma^t C_F .
\end{equation}

On the other hand, we have
\begin{equation}
\label{eq:sub-proof2}
\mathbb{E} \lVert A^tv_{K(t)} \rVert 
= \mathbb{E} \left[\dfrac{v_{K(t)}^\top {A^t}^\top A^tv_{K(t)}}{\lVert A^tv_{K(t)} \rVert}\right]
= \mathbb{E} [u_{K(t)}^\top A^tv_{K(t)}]
= \mathbb{E} \left< u_{K(t)}v_{K(t)}^\top, A^t \right> ,
\end{equation}
and
\begin{equation}
\label{eq:sub-proof3}
\mu\sigma^t_1 = \max_{\lVert S \rVert_* \le \mu} \left< S, A^t \right> .
\end{equation}

Replacing \eqref{eq:sub-proof2} and \eqref{eq:sub-proof3} into \eqref{eq:sub-proof}, we obtain \eqref{eq:sub-exp}. The first assertion of Theorem~\ref{thm:conv-exp} thus holds by application of Lemma~\ref{lem:conv-exp}. For the second assertion, the proof is nearly identical. Indeed, by replacing $\tfrac{\ln m}{K(t)-1}$ with $m \beta^{2K(t)-2}$, we get the desired result.

\section{Implementation Details for Two Tasks}
\label{sec:imp-app}

For the two tasks studied in this paper, we describe the \emph{sufficient information} maintained by workers and how to efficiently update it. Table~\ref{tab:complexity} summarizes the per-worker time and memory complexity of \dfw depending on the representation used for the sufficient information. Generally, the low-rank representation is more efficient when the number of local data points $n_j < \min(d, m)$.
\\

\noindent\textbf{Multi-task least square regression.} Recalling the multi-task regression formulation in \eqref{eq:mls}, for any worker $j$ we will denote by $X_j$ the $n_j\times d$ matrix representing the feature representation of the data points held by $j$. Similarly, we use $Y_j$ to denote the $n_j\times m$ response matrix associated with these data points.
The gradient of \eqref{eq:mls} is given by $\nabla F(W) = X^\top (XW-Y)$. At each step $t$, each worker $j$ will store $(X_j^\top Y_j,X_j^\top X_j, X_j^\top X_jW^t, W^t, \nabla F_j(W^t))$ as sufficient information. The quantities $X_j^\top Y_j$ and $X_j^\top X_j$ are fixed and precomputed. Given $W^t$, $W^{t+1}=(1-\gamma^t)W^t + \gamma^t S^t$ is efficiently obtained by rescaling $W^t$ and adding the rank-1 matrix $\gamma^t S^t$. A similar update scheme is used for $X_j^\top X_jW^t$. Assuming $W^0$ is initialized to the zero matrix, the local gradient is initialized as $\nabla F_j(W^0) = -X_j^\top Y_j$ and can be efficiently updated using the following formula:
\begin{eqnarray*}
\nabla F_j(W^{t+1}) &=& X_j^\top (X_jW^{t+1}-Y_j) = X_j^\top (X_j[(1-\gamma^t)W^t + \gamma^t S^t]-Y_j)\\
&=& (1-\gamma^t)\nabla F_j(W^{t}) + \gamma^t (X_j^\top X_j S^t - X_j^\top Y_j) .
\end{eqnarray*}

The same idea can be applied to perform line search, as the optimal step size at any step $t$ is given by the following closed-form formula:
$$\gamma^t = \frac{\left< -\nabla f(W^t), S^t-W^t \right>}{\left< X^\top X(S^t-W), S^t-W \right>}.$$
\\

\noindent\textbf{Multinomial logistic regression.} We now turn to the multi-class classification problem \eqref{eq:mlr}. As above, for a worker $j$ we denote by $X_j$ its local $n_j\times d$ feature matrix and by $Y_j\in\mathbb{R}^{n_j}$ the associated labels.
The gradient of \eqref{eq:mlr} is given by $\nabla F(W) = X^\top (P - H)$, where $P$ and $H$ are $n\times m$ matrices whose entries $(i,l)$ are $P_{il} = \frac{\exp(w_l^T x_i)}{\sum_k \exp(w_k^T x_i)}$ and $H_{il} = \mathbb{I}[y_i = l]$ respectively. The sufficient information stored by worker $j$ at each step $t$ is $(X_j, X_j^\top H_j, X_jW^t, \nabla F_j(W^t))$. $X_j^\top H_j$ is fixed and precomputed. Assuming that $W^0$ is the zero matrix, $X_jW^t$ is initialized to zero and easily updated through a low-rank update. The local gradient $\nabla F_j(W^t) = X_j^\top P_j - X_j^\top H_j$ can then be obtained by applying the softmax operator on $X_jW^t$.
Note that there is no closed-form for the line search.

\begin{table}[t]
\caption{Time and memory complexity of \dfw for the $j$-th worker on two tasks with dense vs. low-rank representations for the sufficient information.}
\label{tab:complexity}
\centering
\begin{tabular}{ccccc}
\hline\noalign{\smallskip}
 & \multicolumn{2}{c}{\textbf{Multi-task least square}} & \multicolumn{2}{c}{\textbf{Multinomial logistic regression}} \\ 
\noalign{\smallskip}
& Dense & Low-rank & Dense & Low-rank\\
\noalign{\smallskip}\hline\noalign{\smallskip}
Init. & $O(n_j(d^2+md))$ & 0 & $O(n_jd+md)$ & 0 \\ 
Power iter. & $O(md)$ & $O(n_j(d+m))$ & $O(md)$ & $O(n_j(d+m))$ \\
Update & $O(d^2+md)$ & $O(n_j(d+m))$ & $O(n_jmd)$ & $O(n_j(d+m))$ \\ 
Line search & $O(d^2+md)$ & $O(n_jm)$ & --- & --- \\ 
\noalign{\smallskip}\hline\noalign{\smallskip}
Memory & $O(d^2+md)$ & $O(n_j(d+m))$ & $O(n_j(d+m)+md)$ & $O(n_j(d+m))$ \\ 
\noalign{\smallskip}\hline
\end{tabular}
\end{table}

\end{appendices}

\end{document}